\normalfont\fontsize{11}{15}\bfseries}{\thesection}{1em}{}
\normalfont\fontsize{11}{15}\bfseries}{\thesubsection}{1em}{}
\newcolumntype{.}{D{.}{.}{-1}}
\newcolumntype{d}[1]{D{.}{.}{#1}}
\theoremstyle{definition}
\newtheorem{theorem}{Theorem}
\newtheorem{assumption}{Assumption}
\newcommand{\R}{\ensuremath{\mathbb{R}}}
\newcommand{\bbone}{\ensuremath{\mathbbm{1}}}
\newcommand{\E}{\ensuremath{\mathbb{E}}}
\newcommand{\Cov}{\text{Cov}}
\newcommand{\Var}{\text{Var}}
\newcommand{\imbal}{\text{imbalance}}
\newcommand{\imbalance}{\text{imbalance}}
\def\spacingset#1{\renewcommand{\baselinestretch}%
{#1}\small\normalsize} \spacingset{1}
\begin{document}

\pagestyle{plain}

\newcommand{\blind}{0}

\newcommand{\tit}{\Large Approximate Balancing Weights for Clustered Observational Study Designs}

\if0\blind

{\title{\tit\thanks{This research is supported by the Institute of Education Sciences, U.S. Department of Education, through Grant R305D210014. The opinions expressed are those of the authors and do not represent views of the Institute or the U.S. Department of Education. One of the datasets used for this study was purchased with a grant from the Society of American Gastrointestinal and Endoscopic Surgeons. Although the AMA Physician Masterfile data is the source of the raw physician data, the tables and tabulations were prepared by the authors and do not reflect the work of the AMA. The Pennsylvania Health Cost Containment Council (PHC4) is an independent state agency responsible for addressing the problems of escalating health costs, ensuring the quality of health care, and increasing access to health care for all citizens. While PHC4 has provided data for this study, PHC4 specifically disclaims responsibility for any analyses, interpretations or conclusions. Some of the data used to produce this publication was purchased from or provided by the New York State Department of Health (NYSDOH) Statewide Planning and Research Cooperative System (SPARCS). However, the conclusions derived, and views expressed herein are those of the author(s) and do not reflect the conclusions or views of NYSDOH. NYSDOH, its employees, officers, and agents make no representation, warranty or guarantee as to the accuracy, completeness, currency, or suitability of the information provided here. This publication was derived, in part, from a limited data set supplied by the Florida Agency for Health Care Administration (AHCA) which specifically disclaims responsibility for any analysis, interpretations, or conclusions that may be created as a result of the limited data set.}}

\author{Eli Ben-Michael\thanks{Carnegie Mellon University, Pittsburgh, PA, Email: ebenmichael@cmu.edu}
\and Lindsay Page\thanks{Brown University, Providence, RI, Email: lindsay\_page@brown.edu}
\and Luke Keele\thanks{University of Pennsylvania, Philadelphia, PA, Email: luke.keele@gmail.com}
}

\date{\today}

\maketitle
}\fi

\if1\blind
\title{\tit}
\maketitle
\fi

\begin{abstract}

In a clustered observational study, a treatment is assigned to groups and all units within the group are exposed to the treatment. We develop a new method for statistical adjustment in clustered observational studies using approximate balancing weights, a generalization of inverse propensity score weights that solve a convex optimization problem to find a set of weights that directly minimize a measure of covariate imbalance, subject to an additional penalty on the variance of the weights. We tailor the approximate balancing weights optimization problem to both adjustment sets by deriving an upper bound on the mean square error for each case and finding weights that minimize this upper bound, linking the level of covariate balance to a bound on the bias. We implement the procedure by specializing the bound to a random cluster-level effects model, leading to a variance penalty that incorporates the signal signal-to-noise ratio and penalizes the weight on individuals and the total weight on groups differently according to the the intra-class correlation.

\end{abstract}

\begin{center}
\noindent Keywords:
{Balancing Weights, Clustered Observational Study, Clustered Data}
\end{center}

\clearpage
\doublespacing

\spacingset{1.5}

\section{Introduction}

In a study of comparative effectiveness, researchers seek to understand whether a treatment has a causal effect on an outcome of interest for a set of study units. In the causal inference literature, treatment assignment is a critical element of the study design \citep{Rubin:2007,Rubin:2008}. Two key components of treatment assignment are whether the intervention is randomized or not and whether it is grouped or not. First, when interventions are randomly assigned, differences between treated and control groups can be interpreted as causal effects. In contrast, when subjects select their own treatments, stronger assumptions are needed to identify causal effects. Second, interventions may be assigned to individual units or to intact groups. For example, given that students are grouped in schools, a treatment may be assigned to all students in some schools and withheld from all students in other schools. When treatment is randomly assigned and grouped, the design is often called a clustered randomized trial (CRT) \citep{raudenbush1997statistical, hedges2007intraclass}. In many cases, however, treatments are grouped but non-randomly assigned. This design is referred to as the clustered observational study (COS) \citep{Keele:2016b,pagedesign2019}. COS designs differ from non-clustered observational studies, both because they rely on different identification assumptions and because they require different methods for analysis \citep{ye2022clustered}.

In a COS design, as is true for any observational study, differences between treated and control outcomes may reflect initial differences in the treated and control groups rather than treatment effects \citep{Cochran:1965,Rubin:1974}. As such, analysis for COS designs requires statistical adjustment methods to account for observed confounders. Adjustment methods for COS designs, however, may need to remove treated and control differences in the distributions of covariates at the cluster level, the unit level, or perhaps both levels. Recent work has also shown that statistical adjustment in the COS design needs to reflect key substantive knowledge of how the clustered treatment is assigned and whether differential selection of clusters is present \citep{ye2022clustered}. In short, analysts require a statistical adjustment strategy that takes into account key aspects of the COS design.

In this article, we develop an approximate balancing weights estimator tailored to the COS context. Balancing weight estimators solve a convex optimization problem to find a set of weights that directly minimize a measure of covariate imbalance subject to an additional constraint or penalty on the complexity of the weights \citep{Hainmueller2011,zubizarreta2015stable,ben2021balancing}. Approximate balancing weights are a generalization of the standard inverse propensity score estimator.  Approximate balancing weight methodology, however, requires a number of key innovations to be used in the COS context. First, we write separate objective functions with different balance measures and variance penalties for the two possible estimands in a COS. The first estimand adjusts for both cluster- and unit-level covariates while the second adjusts for cluster-level covariates only. For both estimands, we derive a bound on the mean square error of a general weighting estimator, and find weights that minimize this upper bound, showing that the level of balance between the treated and re-weighted control group gives a bound on the bias. Next, we write the variance penalty as a cluster-level random effects model, and we show that the variance penalty is comprised of two parts. The first component is the signal-to-noise ratio which measures the overall impact of the variance relative to bias. The second component is the intra-class correlation. These components correspond to hyper-parameters in the corresponding balancing optimization problem, and we develop a data-driven approach for selecting them. For cases where it is sufficient to condition on cluster-level covariates only, we show that it can be more efficient to further adjust for unit-level covariates when they are strongly predictive of the outcome or if the intra-class correlation is moderate to high. In applications with poor overlap or many covariates, it can be difficult to find weights that achieve good balance. For these cases, we develop two separate extensions: using an outcome model to perform additional bias correction, and adjusting the balancing optimization problem to find the maximally overlapping set between the treated and control group. Finally, we derive methods for variance estimation and show how to conduct inference based on asymptotic normality. In a series of simulations, we find that balancing weights are superior to multilevel matching --- which is also tailored to the COS design --- in terms of bias and variance reduction. We provide further comparisons with two empirical applications estimating the effects of Catholic schools and surgical training. Overall, we find that balancing weights produce superior balance relative to extant matching methods and have larger effective sample sizes.

Our article proceeds as follows. In Section~\ref{sec:cos}, we review the details of the COS design. In Section~\ref{sec:bwt} we develop approximate balancing weights for the COS design and discuss extensions and inference in Section~\ref{sec:ext}. In Section~\ref{sec:sim}, we evaluate our methods in a simulation study. In Section~\ref{sec:apps}, we analyze the data from two COS designs: one from education and one from health services research. In Section~\ref{sec:dis}, we conclude and discuss directions for future work.

\section{The COS Design}
\label{sec:cos}

First, we review the formal aspects of the COS design and how the process of differential selection leads to different estimands and adjustment strategies. See \citet{ye2022clustered} for a detailed treatment of identification issues in the COS design.

\subsection{Notation}

We consider a setup with $m$ clusters, with $n_\ell$ units in cluster $\ell$, and $n = \sum_{\ell = 1}^m n_\ell$ total units. We denote the treatment status of cluster $\ell$ as $A_\ell \in \{0,1\}$, $n_1 \equiv \sum_{\ell = 1}^m (1 - A_\ell)n_\ell$ as the total number of treated units, and $n_0 \equiv n - n_1$ as the total number of control units. Each collection of treatment status vectors $\bm{a} = (a_1,\ldots,a_m) \in \{0,1\}^m$ is associated with a vector of potential \emph{cluster assignments}, $\bm{J(a)} = (J_1(\bm{a}),\ldots, J_n(\bm{a})) \in \{1,\ldots,m\}^n$, where $J_i(\bm{a}) \in \{1,\ldots,m\}$ denotes the cluster that unit $i$ would belong to under overall treatment allocation $\bm{a}$. We denote the observed cluster assignments as $\bm{J} = \bm{J}(\bm{A})$. Each unit $i$ has a \emph{potential outcome} $Y_i(a_{J_i(\bm{a})}, \bm{J}(\bm{a}))$ corresponding to the outcome that would be observed if its associated cluster has treatment status $a_{J_i}$ and the overall cluster assignment is $\bm{J}(\bm{A})$. Note that here we have followed \citet{ye2022clustered} and assumed that a unit's potential outcome only depends on its own cluster's treatment status and not the treatment status of other clusters, except through the potential cluster assignments. We further assume that potential outcomes are independent across clusters --- i.e., they are independent for unit $i$ and $i'$ if $J_i(\bm{a}) \neq J_{i'}(\bm{a})$ --- but may be dependent within clusters. We denote the observed outcomes as $Y_i(A_{J_i}, \bm{J})$. We also assume that we observe unit-level covariates $X_i \in \mathcal{X}$ and that the cluster assignments lead to potential cluster-level covariates, $W_\ell(\bm{J}(\bm(a))) \in \mathcal{W}$, which may include summaries of the unit-level covariates and so can depend on the potential cluster assignments. We let $W_\ell = W_\ell(\bm{J}(\bm(A)))$ denote the observed cluster-level covariates, and $\bm{W}$ denote the collection of observed cluster-level covariates for all clusters. Taken together, our observed data consists of tuples $(X_i, J_i, W_{J_i}, A_{J_i}, Y_i)$.

Next, we define several conditional expectations of the observed outcomes that we will use throughout. First, we denote $m_w(a, w) \equiv \E[Y_i \mid A_{J_i} = a, W_{J_i} = w]$ as the expected outcome for unit $i$, conditioned on its cluster having treatment status $a$ and covariates $w$. Next, we use $m_{wx}(a, w, x) \equiv \E[Y_i \mid A_{J_i} = a, W_{J_i} = w, X_i = x]$ to denote the expected outcome if we add more information and additionally condition on unit-level covariates $x$. Next, let $e(w) = P(A_\ell = 1 \mid W_\ell = w)$ denote the propensity score, the probability that a cluster is treated, conditioning on cluster-level covariates, and $e(w, x) = P(A_\ell = 1 \mid W_\ell = w, X_i = x)$ denote the propensity score conditioning on cluster and unit-level covariates.

\subsection{Estimand, Designs, and Assumptions}

\citet{ye2022clustered} delineate two primary COS designs. The first we denote as the Cluster-Unit Design (CUD), and the second we denote as the Cluster-Only Design (COD). Both designs focus on a common target causal estimand which is the average treatment effect for the treated units under the observed cluster assignments $\bm{J}$,
\begin{equation}
  \label{eq:att}
  \tau \equiv \underbrace{\E\left[Y_i(1, \bm{J}) \mid A_{J_i} = 1\right]}_{\mu_1} - \underbrace{\E\left[Y_i(0, \bm{J}) \mid A_{J_i} = 1\right]}_{\mu_0}.
\end{equation}
This estimand measures the effect of treatment for units in treated clusters, keeping the cluster assignments fixed. The first term, $\mu_1$ can be written as the expectation of the observed outcomes among units in the treated clusters, $\mu_1 = \E[Y_i \mid A_{J_i} = 1]$.
However, identifying and estimating $\mu_0$, the mean counterfactual outcome if those clusters had in fact been assigned to control, is more difficult. Next, we review the key assumptions required for identification of this target causal estimand under the two designs.

\subsubsection{Cluster-Unit Design}

The key distinction between the identification strategies for these two designs is whether the units within the clusters respond to treatment assignment to clusters. That is, under the CUD, the mix of units within clusters is changed by the fact that clusters are treated. \citet{ye2022clustered} refer to such bias as differential selection. Differential selection could occur if the student mix in a school changes in response to the school being treated. For example, parents may opt to enroll their child in a school implementing a whole-school curricular reform. For the CUD, we need to account for unit selection into the clusters, which can depend on the treatment assignments. Identification of $\mu_0$ under the CUD requires the following set of assumptions:
  \begin{itemize}
    \item For two cluster assignment vectors $\bm{j}$ and $\bm{j}'$ such that the treatment assignments are the same, $a_{j_i} = a_{j_i}'$, and the cluster-level covariates are unchanged $W_{j_i} = W_{j'_i}$ for unit $i$, the potential outcomes are equal $Y_i(a_{j_i}, \bm{j}) = Y_i(a_{j'_i}, \bm{j'})$.
    \item Denote $Y_i(a, w) = Y_i(a_{j_i} = a, W_{j_i} = w)$. For every unit $i$, cluster $\ell$, treatment value $a$, and cluster-level covariates value $w$, $A_{J_i} \perp Y_i(a, w) \mid W_{J_i}, X_i$.
    \item  $e(w,x) > 0$ for all $w,x \in \mathcal{W,X}$.
    \end{itemize}
Taken together, these assumptions form an ignorability assumption that includes both cluster-level covariates and unit-level covariates that drive treatment selection. Here, we identify $\mu_0$ as
\[
  \mu_0 = \E[m_w(0, W_{J_i}, X_i) \mid A_{J_i} = 1] \equiv \mu_{0wx}.
\]

\subsection{Cluster-Only Design}

For the COD, treatment assignment is restricted to cases where either treatment assignment occurs after the unit-cluster pairing, or units are blinded to the cluster assignments before pairing.  As such, treatment assignment only depends on cluster-level covariates. Identification of $\mu_0$ under the COD requires the following set of assumptions:

\begin{itemize}
  \item Cluster assignments are not affected by treatment: $J_\ell(\bm{a}) = J_\ell$ for all $\bm{a} \in \{0,1\}^m$.
  \item For every unit $i$, cluster $\ell$, cluster assignment vector $\bm{j}$, and treatment value $a$, $A_{J_i} \perp (\bm{J}, X_i, Y_i(a, \bm{j})) \mid W_{J_i}$.
  \item $e(w) > 0$ for all $w \in \mathcal{W}$.
  \end{itemize}
Here, conditioning on the cluster-level covariates is sufficient to remove confounding. As such, under the COD, the estimated for the expected outcome conditional on the cluster being assigned to the control condition and the cluster-level covariates is identified as:
\[
  \mu_0 = \E[m_w(0, W_{J_i}) \mid A_{J_i} = 1] = \mu_{0w}.
\]

The CUD requires a stronger assumption about the form of the potential outcomes but a weaker assumption on the treatment selection process than the COD. However, the key distinction between these two identification strategies is the conditioning sets. For the CUD, we must condition on both unit and cluster-level covariates. For the COD, we only need to condition on cluster-level covariates. As such, the COD is a special case of the CUD without unit-level covariates. Below, we consider the implications of conditioning on unit-level covariates when the assumptions of the COD hold, and we show that while unit-level covariates are unnecessary for identification, they may be beneficial in terms of increased precision. Next, we tailor approximate balancing weight methods to each design.

\section{Approximate Balancing Weights for the COS Design}
\label{sec:bwt}

Recent work has developed matching methods---known as multilevel matching---that are specifically tailored to the COS design \citep{Keele:2015,Keele:2016b,keele2021matching}. Here, we outline approximate balancing weights as an alternative to multilevel matching. Approximate balancing weights are a generalization of the standard inverse propensity score (IPW) estimator that solve  a convex optimization problem to find a set of weights that directly minimize a measure of covariate imbalance subject to an additional constraint or penalty on the complexity of the weights \citep{Hainmueller2011,zubizarreta2015stable}. Like matching, and unlike IPW, balancing weights are designed to directly target covariate balance in the estimation process, as opposed using the estimated probability of being selected for treatment. Like weighting, and unlike matching, balancing weights use more of the available data and so often have larger effective sample sizes.

Recent theoretical innovations have bolstered support for balancing weights. For example, it has been shown that balancing weights are implicitly estimates of the inverse propensity score, fit via a loss function that guarantees covariate balance \citep{Zhao2016a, Zhao2019,Wang2018,Chattopadhyay2020}. Researchers have proposed non-parametric extensions that allow for flexible specifications of the outcome model or propensity score \citep{hirshberg2019minimax,Hazlett2019}. Finally, recent literature on weighting allows these estimators to target different quantities such as sample overlap \citep{li2018balancing}. See \citet{ben2021balancing} for a general overview on balancing weights.

Approximate balancing weights for the COS design, however, require the development of specific objective functions that are tailored to the identification conditions outlined above. Next, we develop objective functions that target measures of covariate imbalance and variance penalties that are specific to COS designs. As we will see, one key challenge in developing objective functions for COS designs is accounting for how clustering affects the variance of the weights. Below, we consider a generic weighting estimator of the form
\[
  \hat{\mu}_0(\gamma) = \frac{1}{n_1}\sum_{\ell=1}^m (1 - A_\ell) \sum_{J_i = \ell} \gamma_i Y_i,
\]
where the weights $\gamma_i$ are independent of the outcomes.
We will first consider conditioning on cluster- and unit-level covariates to target $\mu_{0wx}$.
Then we will specialize this estimator to only condition on cluster-level covariates and target $\mu_{0w}$.
For both cases, we inspect the mean square error of the generic weighted average of control units' outcomes, then find weights that minimize an upper bound. Under both settings, we will consider the finite sample estimands:
\[
  \tilde{\mu}_{0wx} \equiv \frac{1}{n_1}\sum_{A_\ell = 1} \sum_{J_i = \ell} m_{wx}(0, W_{J_i}, X_i), \;\;\text{ and }\;\; \tilde{\mu}_{0w} \equiv \frac{1}{n_1} \sum_{A_\ell = 1} n_\ell m_w(0, W_\ell),
\]
which converge to our main estimands.

\subsection{Conditioning on cluster- and unit-level covariates}
\label{sec:bal_unit}
To target $\mu_{0wx}$, we use the following decomposition for the estimation error of the weighting estimator:
\begin{equation}
  \label{eq:bias_var_unit}
  \hat{\mu}_0\left(\gamma\right) - \tilde{\mu}_{0wx} =  \frac{1}{n_1}\sum_{A_\ell = 0} \sum_{J_i = \ell}\gamma_i m_{wx}(0, W_\ell, X_i) - \frac{1}{n_1} \sum_{A_\ell = 1} \sum_{J_i = \ell} m_{wx}(0, W_\ell, X_i) + \frac{1}{n_1}\sum_{A_\ell = 0}\sum_{J_i = \ell} \gamma_i \varepsilon_i.
\end{equation}
\noindent To understand this decomposition, we compute the design-conditional bias and variance. First, since the weights are independent of the outcomes, under the ignorability condition for the CUD, the conditional bias corresponds to the first term in the decomposition:
\begin{equation}
  \label{eq:bias_unit}
  \E\left[ \hat{\mu}_0\left(\gamma\right) - \tilde{\mu}_{0wx} \mid \bm{W}, \bm{A}, \bm{J}, \bm{X} \right] = \frac{1}{n_1}\sum_{A_\ell = 0} \sum_{J_i = \ell}\gamma_i m_{wx}(0, W_\ell, X_i) - \frac{1}{n_1} \sum_{A_\ell = 1} \sum_{J_i = \ell} m_{wx}(0, W_\ell, X_i).
\end{equation}
This bias is the post-weighting imbalance in a particular function of the cluster- and unit-level covariates: the conditional expectation function $m_{wx}(0, \cdot, \cdot)$. If we knew $m_{wx}$, then we could remove all bias by ensuring that this function is balanced. Unfortunately we do not know it, or else we would be able to perfectly impute the missing potential outcomes. Instead, following \citet{hirshberg2019minimax} and \citet{ben2021balancing}, we consider a class of potential models $\mathcal{M}_{wx}$ and consider the worst case bias over this class:
\[
  \imbal_{\mathcal{M}_{wx}}(\gamma) = \max_{m \in \mathcal{M}_{wx}} \left|\frac{1}{n_1}\sum_{A_\ell = 0} \sum_{J_i = \ell}\gamma_i m(0, W_\ell, X_i) - \frac{1}{n_1} \sum_{A_\ell = 1} \sum_{J_i = \ell} m(0, W_\ell, X_i) \right|.
\]
Here, since we are conditioning on both unit and cluster-level covariates, the model class $\mathcal{M}_{wx}$ can be relatively complex. For example, it might include interactions between $W_{J_i}$ and $X_i$.

Next, the design-conditional variance corresponds to the second term:
\begin{equation}
  \label{eq:var_unit}
    V^\text{unit} \equiv \Var\left(\hat{\mu}_0\left(\gamma\right) \mid \bm{W}, \bm{A}, \bm{J}, \bm{X} \right) = \frac{1}{n_1^2}\sum_{A_\ell = 0}\left[ \sum_{J_i = \ell} \gamma_i^2 \Var\left(\varepsilon_i\right) + \sum_{J_i = \ell}\sum_{J_k = \ell, k \neq i} \gamma_i\gamma_j\Cov(\varepsilon_i, \varepsilon_k)\right],
\end{equation}
where $\varepsilon_i = Y_i - m_{wx}(0, W_{J_i}, X_i)$ is the residual in the observed outcomes after conditioning on cluster- and unit-level covariates.
Note that due to potential correlation in the outcomes within clusters, the variance term includes a cross-term involving the product of pairs of weights on each unit, in addition to the typical squared weight term used in settings where the weights are independent. Taken together, we try to find weights that minimize an upper bound on the design-conditional mean square error:
\begin{equation}
  \label{eq:general_problem_unit}
  \min_{\gamma} \imbal_{\mathcal{M}_{wx}}(\gamma)^2 + \frac{1}{n_1^2}\sum_{A_\ell = 0}\left[ \sum_{J_i = \ell} \gamma_i^2 \Var\left(\varepsilon_i\right) + \sum_{J_i = \ell}\sum_{J_k = \ell, k \neq i} \gamma_i\gamma_j\Cov(\varepsilon_i, \varepsilon_k)\right].
\end{equation}

Implementing the optimization of Equations~\eqref{eq:general_problem_no_unit} requires making several modeling choices. First, we must choose the model class for the conditional expectation function $\mathcal{M}_{wx}$. Second, we must specify the variances and covariances of the residuals. In making these choices, we will attempt to strike a balance between flexibility and practicality.

We begin by selecting the model class for $\mathcal{M}_{wx}$. In the non-clustered treatment assignment setting, many model classes have been considered, from sparse models \citep{zubizarreta2015stable} to models with expanding basis functions \citep{wang2020minimal} to reproducing kernel Hilbert spaces \citep{hirshberg2019minimax}. See \citet{ben2021balancing} for a recent review and discussion on the difficulty of balancing these model classes. To target $\mu_{0wx}$ in the COS design, we consider a model class that is linear in a transformation of the covariates and incorporating both cluster- and unit-level covariates with $L^2$-bounded coefficients and an unbounded intercept, i.e.:
\[
    \mathcal{M}_{wx} = \Psi_{wx} \equiv \{\alpha + \beta \cdot \psi(w, x) \mid \|\beta\|_2 \leq C_{wx}, \alpha \in \R\}.\\
\]
These model classes naturally allow for a non-parametric extension to reproducing kernel hilbert spaces with infinite-dimensional transformations via the ``kernel'' trick, e.g., defining a kernel $k((w_1, x_1), (w_2, x_2)) = \psi(w_1, x_1) \cdot \psi(w_2, x_2)$.
Notably, these transformations, $\psi(w, x)$, can include interactions between the two levels of variables, allowing for the cluster context to affect the relationship between the outcome and the unit-level covariates.

To specify the variances and covariances of the residuals, we use a random effects model. Momentarily abusing notation, we write the residual between unit $i$'s outcome and its expected control outcome conditional on cluster- and unit-level covariates as $Y_i - m_{wx}(W_{J_i}, X_i) = \delta_{J_i} + \varepsilon_i$, with a cluster-level random effect $\delta_{J_i}$ and an independent unit-level residual $\varepsilon_i$. We parameterize the variance with two terms: $\Var(\delta_{J_i}) = \sigma^2 \rho$, and $\Var(\varepsilon_i) = \sigma^2 (1 - \rho)$. Under the random-effects model, $\sigma^2$ represents the total residual variance, and $\rho$ is the \emph{intra-class correlation} (ICC), which is a well-known measure of relatedness within clusters.
Under this random effects model, the variance is
\[
\frac{\sigma^2}{n_1^2}\sum_{A_\ell = 0} \left[(1-\rho)\sum_{J_i = \ell}\gamma_i^2 + \rho\left(\sum_{J_i = \ell}\gamma_i\right)^2\right].
\]
We then find the weights, $\gamma_i$ that solve the following optimization problem:
\begin{equation}
  \label{eq:balance_unit}
  \begin{aligned}
    \min_{\gamma} & \left\|\frac{1}{n_1}\sum_{A_\ell = 0} \sum_{J_i = \ell}\gamma_i \psi(W_\ell, X_i) - \frac{1}{n_1} \sum_{A_\ell = 1} \sum_{J_i = \ell} \psi(W_\ell, X_i)\right\|_2^2 + \frac{\sigma^2}{C_{wx}^2}\frac{1}{n_1^2}\sum_{A_\ell = 0} \left[(1-\rho)\sum_{J_i = \ell}\gamma_i^2 + \rho\left(\sum_{J_i = \ell}\gamma_i\right)^2\right]\\
    \text{subject to} & \sum_{A_\ell = 0}\sum_{J_i = \ell}\gamma_i = n_1 \;\; \text{ and } \;\; L \leq \bar{\gamma}_\ell \leq U,
  \end{aligned}
\end{equation}
where we have included some additional optional upper and lower bounds on the weights. This objective function implements the general balancing weights problem in Equation~\eqref{eq:general_problem_unit} with the model class $\Psi_{wx}$ and a random-effects model. The variance penalty includes two hyperparameters. First, there is the noise to signal ratio, $\sigma^2/C_w^2$, that measures the overall impact of the variance relative to the bias in the MSE. If this ratio is small, then better balance will be prioritized over lower variance; if the ratio is large then the opposite is true. The second hyperparameter is the ICC, which determines the level of penalization for each cluster. If the ICC is small and units' outcomes are nearly uncorrelated, then the weight on each unit will be penalized the same. Conversely, if the ICC is large then the outcomes are very correlated within clusters, the penalization focuses on the total weight assigned to each cluster instead.  We discuss setting these hyperparameters below. Note that this specification of the variance penalty is equivalent to \citet{Rubinstein2022_region}, who consider region-level policy analysis via weighting.

The weights are also constrained to sum to the total number of treated units, and to be bounded between a lower bound $L$ and an upper bound $U$. The former constraint comes from the possibility of an unbounded intercept in the model class $\Phi_{wx}$; by ensuring the sum constraint we can be sure that the estimator is invariant to constant shifts in the outcome. The latter constraint acts as a form of regularization, and allows us to perform typically post-hoc adjustments such as weight trimming directly when finding our weights. If $L = 0$ and $U = \infty$, the estimator will be restricted from extrapolating away from the support of the control data \citep[see][for further discussion on the role of extrapolation]{benmichael2021_augsynth}. If we additionally set $U$ to be non-infinite, we can prevent any weights from becoming extremely large, at the cost of balance.

\subsection{Conditioning on cluster-level covariates only}
\label{sec:bal_no_unit}

Next, we consider estimating $\mu_{0w}$, which does not condition on unit-level covariates. This weighting estimator ignores the unit-level information, with weights that are constant within clusters, i.e. $\gamma^\text{clus}_i = \bar{\gamma}_{J_i}$ for all units $i$. Here, the results are a special case of the CUD, removing the unit-level covariates from the optimization procedure. The design-conditional bias and variance in this special case are:
\begin{align}
  \label{eq:bias_no_unit}
  \E\left[ \hat{\mu}_0\left(\gamma^\text{clus}\right) - \tilde{\mu}_{0w} \mid \bm{W}, \bm{A}, \bm{J} \right] & = \frac{1}{n_1}\sum_{A_\ell = 0}n_\ell \bar{\gamma}_\ell m_w(0, W_\ell) - \frac{1}{n_1} \sum_{A_\ell = 1} n_\ell m_w(0, W_\ell),\\
  \label{eq:var_no_unit}
  V^\text{clus} \equiv \Var\left(\hat{\mu}_0\left(\gamma^\text{clus}\right) \mid \bm{W}, \bm{A}, \bm{J}\right) & = \frac{1}{n_1^2} \sum_{A_\ell = 0} \bar{\gamma}_\ell^2 \Var\left(\sum_{J_i = \ell} e_i \mid \bm{J}\right).
\end{align}
Here, the bias only depends on imbalance in a function of the cluster-level covariates, $m_w$. As above, we consider a model class $\mathcal{M}_w$ and upper bound the bias by worst-case imbalance:
\[
  \imbal_{\mathcal{M}_w}(\bar{\gamma}) \equiv \max_{m \in \mathcal{M}_w} \left| \frac{1}{n_1}\sum_{A_\ell = 0}n_\ell \bar{\gamma}_\ell m(0, W_\ell) - \frac{1}{n_1} \sum_{A_\ell = 1} n_\ell m(0, W_\ell) \right|.
\]
Comparing the variances $V^\text{clus}$ and $V^\text{unit}$, we see that restricting to cluster-level covariates only removes the cross term between weights in the same cluster. So $V^\text{clus}$ depends principally on the sum of the squared weights, weighted by the total variance of the outcomes in each cluster. Tailoring Equation~\eqref{eq:general_problem_unit} yields the following optimization problem to control the MSE:
\begin{equation}
  \label{eq:general_problem_no_unit}
  \min_{\bar{\gamma}} \imbal_{\mathcal{M}_w}(\bar{\gamma})^2 + \frac{1}{n_1^2} \sum_{A_\ell = 0} \bar{\gamma}_\ell^2 \Var\left(\sum_{J_i = \ell} e_i \mid \bm{J}\right).
\end{equation}

To implement this optimization procedure, we will use the same model class and variance model as above, removing the unit-level covariates.
For the model class, we use the set of models that are linear in transformations of the cluster-level covariates:
\[
    \mathcal{M}_w = \Phi_w  \equiv \{\alpha + \beta \cdot \phi(w) \mid \|\beta\|_2 \leq C_w, \alpha \in \R\}.
\]
For the variance model, we again
assume that the residual decomposes into a cluster-level residual and an independent unit-level residual: $Y_i - m_w(0, W_{J_i}) = e_i + d_{J_i}$, with $\Var(e_i) = s^2 (1 - r)$ and $\Var(d_\ell) = s^2 r$. Here, $s^2$ represents the total variance in the residual, and $r$ is the ICC. Now, the variance can be written as
\[
  \frac{s^2}{n_1^2} \sum_{A_\ell = 0} \bar{\gamma}_\ell^2 ((1 - r)n_\ell + r n_\ell^2).
\]
Putting together the pieces, we find weights $\bar{\gamma}$ that solve the following optimization problem
\begin{equation}
  \label{eq:balance_nounit}
  \begin{aligned}
    \min_{\bar{\gamma}} \; & \left\|\frac{1}{n_1}\sum_{A_\ell = 0}n_\ell \bar{\gamma}_\ell \phi(W_\ell) - \frac{1}{n_1} \sum_{A_\ell = 1} n_\ell \phi(W_\ell)\right\|_2^2 + \frac{s^2}{C_w^2}\frac{1}{n_1^2} \sum_{A_\ell = 0} \bar{\gamma}_\ell^2 ((1 - r)n_\ell + r n_\ell^2)\\
    \text{subject to } & \sum_\ell\sum_{A_\ell = 0}n_\ell \bar{\gamma}_\ell = n_1 \;\;\text{and } \;\; L \leq \bar{\gamma}_\ell \leq U.
  \end{aligned}
\end{equation}

The cluster-level balancing problem in Equation~\eqref{eq:balance_nounit} and the unit-level problem in Equation~\eqref{eq:balance_unit} share many similarities, including the constraints on the weights and the noise-to-signal ratio.\footnote{The noise-to-signal ratio will in general be different when unit-level covariates are included.} However, the key difference is whether the weights can differ within clusters in order to balance the additional unit-level covariates. This is most apparent in how the ICC affects the variance penalty. When only cluster-level covariates are included, the ICC only appears in the role that the cluster size $n_\ell$ has in the variance penalty. When weights are allowed to differ by unit, however, this creates more options in how the weights are penalized.

\subsection{Hyperparameter selection}
\label{sec:hyper}

Both of the objective functions specified above include two terms that we consider hyperparameters: the overall noise to signal ratio, ($s^2 / C^2_{w}$ or $\sigma^2/C_{wx}^2$) and the ICC. The noise to signal ratio governs the bias-variance tradeoff. As it approaches zero, more and more emphasis is placed on achieving covariate balance; in the (unreasonable) noiseless limit, we would not need or want to penalize the variance. On the other hand, as the noise to signal ratio increases and the covariates are less predictive of the outcome, the optimization will prioritize variance more. In contrast, the ICC term determines the form of the variance penalty. In both cases, when the ICC is 0, the weights on each unit are penalized separately since units' outcomes are independent. When the ICC is 1, units' outcomes completely move together within a group, in which case the total weight placed on the group is penalized. Intermediate values of the ICC correspond to a mixture of these penalization schemes.

Note that the particular values of the noise to signal ratio and the ICC only enter the estimator through the variance penalties on the weights in Equations  \eqref{eq:balance_unit} and \eqref{eq:balance_nounit}. This is why we view them as hyperparameters in the optimization problem rather than parameters to be estimated. However, we can use the data at hand to guide our choice of these hyperparameters. We do so by regressing the outcome on the covariates with a random intercept model. The random intercept model estimates cluster- and unit-level variance terms that can be used as an estimate for the ICC. To estimate the signal-to-noise ratio, we take the ratio of the estimated residual variance and the squared sum of the estimated regression coefficients from the fitted model. Below, we use this heuristic to choose the hyperparameters in our simulation studies and empirical analyses. Note that this procedure can induce a dependence between the outcomes and the weights through the hyper-parameters, which can be avoided via sample splitting. Other forms of hyperparameter selection are possible, including cross-validation style approaches that evaluate balance on a held-out sample \citep{wang2020minimal}.

\subsection{When should we include unit-level covariates in the Cluster-Only Design?}
\label{sec:should_we}

Given that under the COD it is sufficient to only include cluster-level covariates, one natural question for this design is whether it is useful to also include unit-level covariates. Including unit-level covariates may decrease the variance: balancing these covariates is akin to adjusting for baseline covariates that predict the outcome in randomized experiments. On the other hand, balancing unit-level covariates may lead to more extreme weights, and, depending on the ICC, this could lead to higher variance.

To characterize this trade off, we consider the ratio of the design conditional variance with and without including unit-level covariates, $V^\text{unit}$ and $V^\text{clus}$, respectively. In cases with excellent balance in both cluster- and unit-level covariates, this will give a measure of the expected difference in the overall mean squared error. To simplify the problem, we restrict our attention to the cluster-level random effects model, and assume that the ICC is unchanged when conditioning on unit-level covariates (i.e., $\rho = r$). Under these assumptions, the variance ratio is
\begin{equation}
  \label{eq:var_ratio}
  \frac{V^\text{unit}}{V^\text{clus}} = \frac{\sigma^2}{s^2} \ d_\text{eff}(\rho),
\end{equation}
where
\[
  d_\text{eff}(\rho) = \frac{(1-\rho) \sum_{A_\ell = 0} \sum_{J_i = \ell}\gamma_i^2 + \rho\sum_{A_\ell = 0}\left(\sum_{J_i = \ell}\gamma_i\right)^2}{(1 - \rho) \sum_{A_\ell = 0} n_\ell  \bar{\gamma}_\ell^2 + \rho \sum_{A_\ell = 0} \left(n_\ell \bar{\gamma}_\ell\right)^2}
\]
is the design effect of including unit-level covariates.

The first term, $\sigma^2/s^2 < 1$ represents the reduction in total variance in the residuals by conditioning on unit-level covariates, and serves to reduce the variance ratio by incorporating additional information. The more predictive the unit-level variables are, the lower we expect this ratio to be. Counteracting this, we have the design effect $d_\text{eff}(\rho)$.
Rearranging Equation~\eqref{eq:var_ratio}, we see that in order for it to be beneficial to include unit-level covariates, the variance after conditioning on unit-level covariates must decrease by at least the design effect, $\sigma^2 < s^2 d_\text{eff}(\rho)$.

For any given value of $\rho$, we expect the cluster-level weights to be less extreme, because they do not have to additionally balance the unit-level covariates, so typically $d_\text{eff}(\rho) \geq 1$. Whether the sum of the squared weights on the units or on the clusters matters more to the design effect depends on the ICC. When $\rho$ is small, there is little effect of clustering on the variance, and so we have a standard tradeoff: including unit-level covariates will increase efficiency as long as the effective sample size does not decrease by more than $1 - \sigma^2/s^2$. Conversely, when $\rho$ is large, the sum of the squared weights on the clusters will dominate, in which case the unit-level weights can differ substantially from their average cluster-level weight without incurring much extra variance. In this case including unit-level covariates can improve efficiency even if they are only somewhat predictive. \citet{hedges2007intraclass} reported that from 41 clustered randomized experiments in education, the ICCs range from 0.07 to 0.31, with an average value of 0.17. \citet{Small:2008b} report that ICCs in the range of .002 to 0.03 in public health interventions that target clusters such as hospitals or clinics. Thus, in education settings --- where ICCs are larger, and we typically have strongly predictive covariates such as prior test scores --- it will likely be beneficial to include unit-level covariates. In contrast, for public health settings it may depend much more on the particular context and the set of unit-level covariates available.

\section{Extensions and Inference}
\label{sec:ext}
\subsection{Dealing with poor balance}

When there are a large number of cluster-level covariates relative to the total number of clusters --- or conversely, when there is a small number of clusters overall --- it may be difficult to find weights that achieve good balance. Often this occurs when the overlap between the treated and control cluster covariate distributions is limited \citep{keele2022overlap}. This is true for the application in Section~\ref{sec:catholic} below. We consider two different approaches to account for this. First, we outline using an outcome model to correct for the bias due to remaining imbalance. Second, we consider changing the estimand by also weighting the treated units to find a maximally sized overlapping set between the treated and control units.

\subsubsection{Bias correction with an outcome estimator}

Bias correction, sometimes also called augmentation, is a popular approach to reducing bias due to imbalance.
We describe the bias-correction procedure when only including both cluster- and unit-level covariates; restricting to cluster-level covariates alone will be analogous.
First we estimate the conditional expectation function to get estimates $\hat{m}_{wx}(0, W_{J_i}, X_i)$.
There are many possible estimation strategies, one choice being regularized regression.
Then we perform bias-correction by estimating $\mu_{0}$ as
\begin{equation}
  \label{eq:bias_correct}
  \hat{\mu}^\text{bc}_0(\gamma) \equiv \hat{\mu}_0(\gamma) + \frac{1}{n_1} \sum_{A_\ell = 1} \sum_{J_i = \ell} \hat{m}_{wx}(0, W_\ell, X_i) - \frac{1}{n_1}\sum_{A_\ell = 0} \sum_{J_i = \ell}\gamma_i \hat{m}_{wx}(0, W_\ell, X_i).
\end{equation}
If we compare this to the bias in Equation~\eqref{eq:bias_var_unit}, we see that $\hat{\mu}^\text{bc}_0(\hat{\gamma})$ uses the estimated model to estimate the bias due to imbalance after weighting. Then it attempts to remove the bias by subtracting the estimated bias off of the estimate.
In the bias-variance decomposition above, bias correction changes the imbalance measure to be over the worst-case \emph{model error} $\hat{m} - m$, which will generally be smaller than the imbalance in the worst-case model \citep{hirshberg2021augmented}.
This bias correction is akin to what has been proposed for matching estimators \citep{Abadie2011_bias_match} and augmented IPW \citep{Robins1994_aipw}, and has been used with a variety of balancing weights estimators \citep[e.g.][]{athey2018approximate, benmichael2021_multical}.

\subsubsection{Subset Weights: Finding a maximally overlapping set}

One issue with bias correction based on an outcome model is that the additional use of modeling can lead to extrapolation away from the control units' data \citep{benmichael2021_augsynth}. One alternative to outcome modeling that can be attractive, especially if overlap is limited, is to focus on a more limited estimand than the ATT. Specifically, one such estimand is the average treatment effect for the overlap population (ATO). The ATO corresponds to the treatment effect for the marginal population that might or might not receive the treatment of interest rather than a known, a priori well-defined population such as the treated group. \citet{li2018balancing} developed model-based overlap weights for the ATO that continuously down-weight the units in the tails of the propensity score distribution. We develop an analog with balancing weights for the COS setting.

To do so, we will weight the treated units as well as the control units, leading to an estimator
\begin{equation}
  \label{eq:w_effect}
  \hat{\tau}(\gamma) = \frac{1}{n_1} \sum_{A_\ell = 1} \sum_{J_i = \ell} \gamma_i Y_i - \frac{1}{n_0} \sum_{A_\ell = 0} \sum_{J_i = \ell} \gamma_i Y_i.
\end{equation}
By weighting the treated units as well as the control units, we are no longer estimating the expected effect among the treated units. We are instead trimming the estimand.
To understand what the new estimand is, consider the design conditional expectation:
\[
  \E\left[\hat{\tau}(\gamma) \mid \bm{W}, \bm{J}, \bm{X}\right] = \frac{1}{n_1} \sum_{A_\ell = 1} \sum_{J_i = \ell} \gamma_i m_{wx}(0, W_\ell, X_i)- \frac{1}{n_0} \sum_{A_\ell = 0} \sum_{J_i = \ell} \gamma_i m_{wx}(0, W_\ell, X_i) + \frac{1}{n_1}\sum_{A_\ell = 1}\sum_{J_i = \ell}\gamma_i \tau(W_\ell, X_i),
\]
where $\tau(w, x) = \E[Y_i(1, \bm{J}) - Y_i(0, \bm{J}) \mid W_{J_i} = w, X_i = x]$ is the conditional average treatment effect (CATE). As before, we want to find weights that balance the conditional expected control outcome. Then $\hat{\tau}(\gamma)$ will be an unbiased estimator of a weighted average of conditional treatment effects for the treated group. We will once again try to find weights that minimize the worst case balance across a model class $\mathcal{M}_{wx}$, now also weighting the treated units:
\[
    \imbal_{\mathcal{M}_{wx}}^o(\gamma) = \max_{m \in \mathcal{M}_{wx}} \left|\frac{1}{n_1}\sum_{A_\ell = 0} \sum_{J_i = \ell}\gamma_i m(0, W_\ell, X_i) - \frac{1}{n_0} \sum_{A_\ell = 1} \sum_{J_i = \ell} \gamma_i m(0, W_\ell, X_i) \right|.
\]
The design-conditional variance of $\hat{\tau}(\gamma)$ is
\[
  \Var\left(\hat{\tau}(\gamma) \mid \bm{W}, \bm{A}, \bm{J}, \bm{X} \right) = \sum_{\ell=1}^m\left(\frac{A_\ell}{n_1} + \frac{1 - A_\ell}{n_0}\right)^2\left[ \sum_{J_i = \ell} \gamma_i^2 \Var\left(\varepsilon_i\right) + \sum_{J_i = \ell}\sum_{J_k = \ell, k \neq i} \gamma_i\gamma_j\Cov(\varepsilon_i, \varepsilon_k)  \right],
\]
where we generalize the definition of the residual to be $\varepsilon_i = Y_i - m_{wx}(A_{J_i}, W_{J_i}, X_i)$.
With this, we once again try to find weights that minimize the imbalance and the variance. Focusing on the specialization to the constrained linear transformation model class $\Psi_{wx}$ and the random effects variance model, we find weights $\gamma_i$ that solve:
\begin{equation}
  \label{eq:balance_overlap_unit}
  \begin{aligned}
    \min_{\gamma} & \left\|\frac{1}{n_0}\sum_{A_\ell = 0} \sum_{J_i = \ell}\gamma_i \psi(W_\ell, X_i) - \frac{1}{n_1} \sum_{A_\ell = 1} \sum_{J_i = \ell} \gamma_i\psi(W_\ell, X_i)\right\|_2^2\\
    &  + \frac{\sigma^2}{C_{wx}^2}\sum_{\ell=1}^m \left(\frac{A_\ell}{n_1} + \frac{1 - A_\ell}{n_0}\right)^2\left[(1-\rho)\sum_{J_i = \ell}\gamma_i^2 + \rho\left(\sum_{J_i = \ell}\gamma_i\right)^2\right],\\
    \text{subject to} & \sum_{A_\ell = 0}\sum_{J_i = \ell}\gamma_i = n_0 \;\; \text{ and } \sum_{A_\ell = 1}\sum_{J_i = \ell}\gamma_i = n_1 \;\; \text{ and } \;\; L \leq \bar{\gamma}_\ell \leq U.
  \end{aligned}
\end{equation}
This optimization problem tries to find weights on both treated and control units such that the weighted average of their transformed covariates are similar. Hereafter, we refer to these weights as subset weights, since they weight the subset of the data for which the covariate distributions overlap.
The variance penalty in Equation~\eqref{eq:balance_overlap_unit} serves to ensure that this weighted subset is pushed towards having a larger effective number of units. Finally, Equation~\eqref{eq:balance_overlap_unit} is also related to the Lagrangian dual of the SVM problem. \citet{Tarr2021_svm} show that the SVM dual minimizes the same measure of imbalance, but includes a penalty to ensure that the sum of the weights is not small, leading to a different measure of the ``size'' of the overlapping set.

\subsection{Variance Estimation and Uncertainty Quantification}
\label{sec:var_estimate}

We now turn to constructing asymptotically valid confidence intervals for $\mu_0$ by estimating the variance of $\hat{\mu}_0(\hat{\gamma}) -\mu_0$ and relying on asymptotic normality. When estimating the variance, it is important to account for dependence within clusters. One method for variance estimation in this context is the cluster-robust sandwich estimator \citep{huber1967under,white1980}, which we adapt here. Focusing on the CUD first, we estimate the conditional expectation function $\hat{m}_{wx}(0, w, x)$, then compute unit-level residuals $\hat{\varepsilon}_i \equiv Y_i - \hat{m}_{wx}(0, W_{J_i}, X_i)$.
This leads to a plug-in estimator for the variance:
\[
  \hat{V}^\text{unit} \equiv \frac{1}{n_1^2}\sum_{A_\ell = 0}\left[ \sum_{J_i = \ell} \gamma_i^2 \hat{\varepsilon}_i^2 + \sum_{J_i = \ell}\sum_{J_k = \ell, k \neq i} \gamma_i\gamma_j\hat{\varepsilon}_i \hat{\varepsilon}_k\right].
\]
\noindent Assumption~\ref{a:technical} in the Supplementary Materials lists regularity conditions for asymptotic inference, based on conditions from \citet{Hansen2019_asymptotic}. Importantly, these regularity conditions allow the cluster sizes to grow, but limit the number of clusters. 

Here we highlight the dependence of the variances on the sample size by indexing them by $n$.
To state the asymptotic normality result below, define $\mu_l \equiv \sum_{J_i = \ell}\sum_{J_k = \ell}\hat{\gamma}_i \hat{\gamma}_k \E\left[\varepsilon_i\hat{m}(0, W_\ell, X_i) \mid  \bm{W}, \bm{A}, \bm{J}, \bm{X} \right]$ as the covariance between the true residuals and the estimated model predictions in cluster $\ell$.
\begin{theorem}
  \label{thm:asymp_normal}
  Under the regularity conditions in Assumption~\ref{a:technical}, with constant lower and upper bounds $L$ and $U$ in Equation~\eqref{eq:balance_unit}, if $\imbal_{\mathcal{M}_{wx}}(\hat{\gamma}) = o_p\left(1/\sqrt{V^\text{unit}_n}\right)$, then
  \[
    \frac{1}{\sqrt{V^\text{unit}_n}} \left(\hat{\mu}(\hat{\gamma}) - \tilde{\mu}_{0wx}\right) \Rightarrow N(0,1).
  \]
  Furthermore, if $\sup_{w,x} |\hat{m}(0, w, x) - m(0, w, x)| = o_p(1)$ and $\frac{1}{n_1^2}\sum_{A_\ell = 0} \mu_l \to 0$
  then $\frac{\hat{V}^\text{unit}} {V^\text{unit}_n} \to 1$ in probability, and consequently,
  \[
    \frac{1}{\sqrt{\hat{V}^\text{unit}_n}} \left(\hat{\mu}(\hat{\gamma}) - \tilde{\mu}_{0wx}\right) \Rightarrow N(0,1).
  \]
\end{theorem}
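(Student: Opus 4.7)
The plan is to work conditionally on the design $(\bm{W}, \bm{A}, \bm{J}, \bm{X})$, since by construction the weights $\hat{\gamma}$ are measurable with respect to this $\sigma$-algebra and, by the independence-across-clusters assumption stated in Section~\ref{sec:cos}, the residuals $\varepsilon_i = Y_i - m_{wx}(0, W_{J_i}, X_i)$ are independent across clusters (but correlated within). I would start from the decomposition in Equation~\eqref{eq:bias_var_unit}, which gives
\[
  \hat{\mu}_0(\hat{\gamma}) - \tilde{\mu}_{0wx} = B_n(\hat{\gamma}) + N_n(\hat{\gamma}),
\]
where $B_n$ is the imbalance term and $N_n = \frac{1}{n_1}\sum_{A_\ell=0}\sum_{J_i=\ell}\hat{\gamma}_i \varepsilon_i$ is a sum of cluster-level mean-zero random variables. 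Since $m_{wx}(0,\cdot,\cdot) \in \mathcal{M}_{wx}$, we have $|B_n(\hat{\gamma})| \le \imbal_{\mathcal{M}_{wx}}(\hat{\gamma})$, and the hypothesis $\imbal_{\mathcal{M}_{wx}}(\hat{\gamma}) = o_p(1/\sqrt{V_n^{\text{unit}}})$ immediately renders this term negligible after standardization.

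Next I would establish the conditional CLT for $N_n(\hat{\gamma})/\sqrt{V_n^{\text{unit}}}$. Writing $Z_\ell = (1-A_\ell)\sum_{J_i=\ell}\hat{\gamma}_i \varepsilon_i / n_1$, the $Z_\ell$'s are conditionally independent with mean zero and with total conditional variance $V_n^{\text{unit}}$ by Equation~\eqref{eq:var_unit}. The key step is a Lindeberg--Feller CLT for this triangular array. The boundedness of the weights via $L \le \bar\gamma_\ell \le U$ together with the bounded cluster-size and moment conditions in Assumption~\ref{a:technical} imply that each cluster contribution $|Z_\ell|$ is controlled, while the restriction on the number of clusters forces no single cluster to dominate $V_n^{\text{unit}}$. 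Verifying the Lindeberg condition in this clustered, data-dependent-weight setting is the main technical obstacle: one needs to bound $\sum_\ell \E[Z_\ell^2 \mathbbm{1}\{|Z_\ell| > \varepsilon \sqrt{V_n^{\text{unit}}}\} \mid \text{design}]$ uniformly using bounded weights and a uniform moment bound (say $2+\delta$) on the within-cluster residual sums. Combined with the negligibility of $B_n$, this yields the first displayed asymptotic normality conclusion.

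For the variance consistency, I would substitute $\hat{\varepsilon}_i = \varepsilon_i - (\hat{m}(0, W_{J_i}, X_i) - m_{wx}(0, W_{J_i}, X_i))$ into $\hat V^{\text{unit}}$ and expand the product $\hat{\varepsilon}_i \hat{\varepsilon}_k$ into four terms. The leading term $\frac{1}{n_1^2}\sum_{A_\ell=0}\big[\sum_{J_i=\ell}\hat{\gamma}_i^2 \varepsilon_i^2 + \sum_{J_i \neq J_k, J_i=J_k=\ell} \hat{\gamma}_i \hat{\gamma}_k \varepsilon_i \varepsilon_k\big]$ converges in ratio to $V_n^{\text{unit}}$ by a law-of-large-numbers argument applied to the independent cluster-sums $\sum_{J_i = \ell}\sum_{J_k = \ell}\hat{\gamma}_i\hat{\gamma}_k \varepsilon_i \varepsilon_k$, using bounded weights and the moment conditions to control the variance. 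The pure-$(\hat m - m)^2$ term vanishes by the uniform consistency $\sup_{w,x}|\hat m - m| = o_p(1)$ and boundedness of the weights. The cross-terms, which have the form $\frac{1}{n_1^2}\sum_{A_\ell=0}\sum_{J_i,J_k = \ell} \hat{\gamma}_i \hat{\gamma}_k \varepsilon_i (\hat m - m)(0, W_\ell, X_k)$, are exactly what the assumption $\frac{1}{n_1^2}\sum_{A_\ell = 0}\mu_\ell \to 0$ is designed to kill, after noting that their conditional mean equals $\frac{1}{n_1^2}\sum_{A_\ell=0}\mu_\ell + o_p(1)$ once $\hat m$ is replaced by its true counterpart on the residual side.

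Finally, the second display follows from Slutsky's theorem applied to the first display together with $\hat V^{\text{unit}}/V_n^{\text{unit}} \to 1$ in probability. I would expect the most delicate bookkeeping to lie in the Lindeberg verification, since the cluster contributions $Z_\ell$ mix random cluster sizes, random weights, and within-cluster dependence; the role of Assumption~\ref{a:technical} is precisely to ensure these ingredients are simultaneously tame.
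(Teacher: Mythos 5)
Your proposal follows essentially the same route as the paper: the same bias/noise decomposition from Equation~\eqref{eq:bias_var_unit}, negligibility of the bias term via the imbalance hypothesis, a conditional CLT for the cluster-level weighted residual sums, and the same expansion of $\hat{V}^\text{unit}$ into a true-residual term, a squared model-error term controlled by uniform consistency of $\hat{m}$, and a cross term killed by the condition $\frac{1}{n_1^2}\sum_{A_\ell = 0}\mu_\ell \to 0$, followed by Slutsky. The only difference is that where you propose to verify the Lindeberg condition and the variance law of large numbers by hand, the paper delegates both steps to Theorems 2 and 3 of \citet{Hansen2019_asymptotic}, after first checking that the constant bounds $L \leq \hat{\gamma}_i \leq U$ ensure the weighted residuals inherit the uniform integrability condition of Assumption~\ref{a:technical}.
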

Theorem~\ref{thm:asymp_normal} implies that we can construct approximate $1-\alpha$ confidence intervals as $\hat{\mu}_0(\hat{\gamma}) \pm z_{1-\alpha / 2} \sqrt{\hat{V}^\text{unit}}$, where $z_{1-\alpha/2}$ is the $1-\alpha$ quantile of the standard normal distribution.
The key assumption is that the weights $\hat{\gamma}$ can achieve good balance in the sense that the worst-case imbalance $\imbal_{\mathcal{M}_{wx}}(\hat{\gamma})$ converges to zero faster than $1/\sqrt{V^\text{unit}}$. This rate depends on the correlation within clusters. If units are uncorrelated, it will scale with the total number of units; if units are perfectly correlated then it will scale with the number of clusters; see \citet{Hansen2019_asymptotic} for further discussion on rates of convergence with clustered data. Whether the weights can actually achieve this level of balance depends on the model class. \citet{hirshberg2019minimax} show that for Reproducing Kernel Hilbert Spaces with i.i.d. data, the imbalance will be small enough, while \citet{hirshberg2021augmented} show that the bias-corrected estimator will have small enough bias in more general settings. See \citet{ben2021balancing} for further discussion. Finally, Theorem~\ref{thm:asymp_normal} assumes that the estimated model $\hat{m}$ is consistent and that the covariance between the true residuals and the estimated model predictions converges to zero, ensuring that the variance estimator is consistent. The latter can be guaranteed by using sample-splitting or cross fitting, or by using models with restricted complexity.

Below, we estimate $\hat{m}$ via regularized weighted least squares on the control units, using the weights from Equation \eqref{eq:balance_nounit} above. If we only include an intercept and exclude all of the covariates in the regression, then these variance estimates are equivalent to the cluster-robust sandwich estimates of a weighted mean. However, this approach may be conservative as it disregards the variance reduction due to balancing the covariates.

Finally, by including a separate regression for the treated units, we can also compute a plug in estimate of the variance for the maximal overlap estimator $\hat{\tau}(\gamma)$ above. In addition, to construct a variance estimate for the COD we can follow the same procedure: (i) estimate the conditional expectation function $\hat{m}_{w}(0, w)$; (ii) estimate the residuals $\hat{\varepsilon}_i \equiv Y_i - \hat{m}_{w}(0, W_{J_i})$; and (iii) create a plug-in estimate for the variance:
\[
  \hat{V}^\text{clus} \equiv \frac{1}{n_1^2} \sum_{A_\ell = 0} \bar{\gamma}_\ell^2 \sum_{J_i = \ell} \sum_{J_k = \ell}\hat{e}_i\hat{e}_k.
\]

\section{Simulation Study}
\label{sec:sim}

Next, we conduct a simulation study to understand the performance of balancing weights using a simulation design from \citet{keele2021matching} developed to evaluate multilevel matching. First, we describe the data generation process (DGP), which we alter slightly to manipulate the level of overlap between the treated and control distributions. The DGP partially depends on empirical data from a summer school reading intervention that follows the COS template. In the data, there are 18 treated schools with 1,367 students, and 26 control schools with 2,060 students. There are 5 student-level variables and 9 school-level variables.  The student-level variables are reading and math test scores and indicator variables for race, ethnicity, and sex. The school-level covariates include the percentage of students who receive free/reduced price lunch, who are English language learners, and who are proficient in math and reading based on state standardized tests. The school-level covariates also include the share of teachers who are novice (e.g., in their first year), the rate of year-to-year staff turnover, and student average daily attendance. 

For this DGP, we first fit a school-level propensity score model where we regressed the observed treatment indicator on the following set of school level variables: the percentage of students receiving free/reduced price lunch, the percentage of students who are English language learners, the percentage of teachers who are novice, and student average daily attendance. We denote this estimated propensity score as $\hat{e}(w)$.  We define the latent probability of treatment as:
\[
Z^* = (\hat{e}(w)/c) + \text{Unif}(-.5,.5)
\]
where $c$ controls the level of overlap between treated and control clusters. Observed treatment status is generated via the following model: $Z_j = 1(Z^* > 0.25)$.

Next, we fit an outcome model for the observed data. Here, we regressed reading scores on the student-level covariates with the basis expanded to include interactions between race and test scores. After model fitting, we save $\hat{\beta_0}$, the intercept from this regression. We use $\tau$ to denote the true treatment effect estimate in the simulation, and set it to be 0.3 of a standard deviation of the raw outcome measure. We denote student-level reading scores with $R_{ij}$, student-level math scores as $M_{ij}$, and the percentage of students proficient in math and reading in each school with $P_{j}$. Next, we generate potential outcomes under control as:
\[
y_0 = \hat{\beta_0} + 2.5 R_{ij} + 2.5 M_{ij} + 1.9 P_{j} + v_1,
\]
\noindent where $v_1$ is a draw from a normal distribution that is mean zero with a standard deviation of 12. We selected these parameter values so that misspecification would produce a bias of approximately 0.3 standard deviations on a standardized scale. A bias of this magnitude is large enough to completely obscure the true treatment effect. Next, we generated potential outcomes under treatment as $y_1 = y_0 + \tau$, and we generated simulated outcomes as $\tilde{Y}_{ij} = Z_j y_1 + (1 - Z_j) y_0$. Note that in this DGP, selection into treatment at the school level is only a function of school-level covariates, but the potential outcomes under control are a function of the student-level test scores and, to a lesser extent, the overall quality of the school as measured by the percentage of proficient students. This DGP also induces a correlation within clusters. The average ICC across simulations was 0.29 with a standard deviation of 0.04. We judge this to be a common data structure in educational COS designs. Using this DGP, we conduct two different simulation studies.

\subsection{Simulation study 1}

In the first study, we conduct a comparative analysis of adjustment methods for COS designs. First, we produce a naive estimate of the treatment effect as the difference in means without any covariate adjustment, and an estimate using multilevel matching as implemented in \citet{Keele:2016b}. Next, we implement our balancing weights in two ways: first, as balancing weights by solving Equation~\eqref{eq:balance_unit} and second, as subset balancing weights by solving Equation~\eqref{eq:balance_overlap_unit}. We set the hyperparameters via the heuristic in Section \ref{sec:hyper}, using the estimated coefficients and variance components from a multilevel regression model. In this simulation, we focus on how the performance changes as we vary the level of overlap by setting $c$ to values of 1, 2.5, 7.5, and 10. When $c=1$ this will induce poor overlap, and when $c=10$ there is excellent overlap, with intermediate values increasing the level of overlap. For each scenario, we repeated the simulation 1,000 times, and we report the bias and root mean-squared error (RMSE). In our results, we standardize the bias, dividing it by the standard deviation of the control group's outcomes from the original data.

In the first panel of Figure~\ref{fig:sim}, we plot the bias as a function of overlap for all four estimation methods. We observe that matching and both types of weights reduce bias compared to an unadjusted estimate. However, when overlap is poor, both sets of weights remove substantially more bias than multilevel matching. The performance of matching could be improved by trimming treated observations. However, with the balancing weights, we can reduce the bias compared to matching without having to change the estimand. The subset weights also alter the estimand, but allow for an unbiased treatment effect estimate.

\begin{figure}
  \centering
    \includegraphics[scale=0.6]{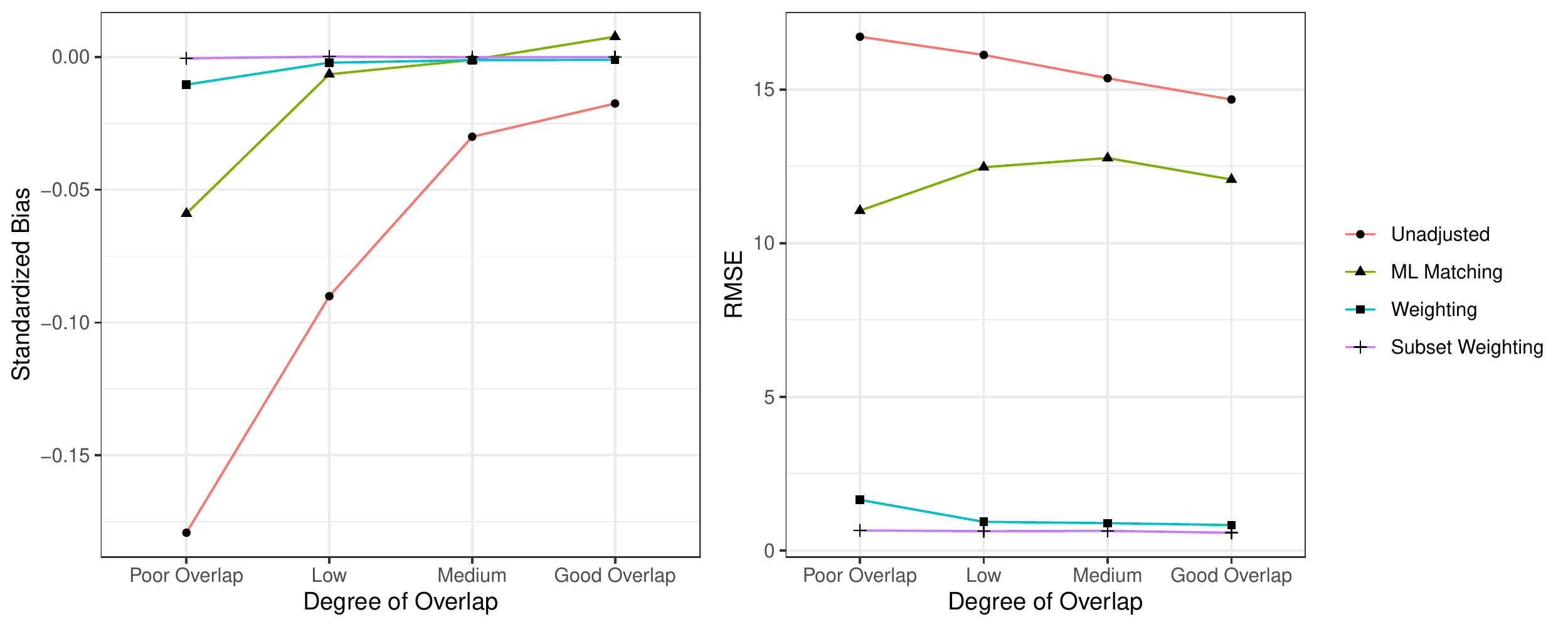}
    \caption{Bias and RMSE for matching and two different weighting estimators by overlap condition.}
  \label{fig:sim}
\end{figure}

We next compare the methods in terms of RMSE. Both matching and weighting discard data. Matching discards some of the control schools and units, while weighting gives some of the control schools and students zero weight. One open question is whether one method or the other does so in a more efficient fashion. In the second panel of Figure~\ref{fig:sim}, we plot the RMSE as a function of overlap for all the estimation methods. In this scenario, the difference in performance between matching and balancing weights is clear. While the subset weights have the lowest RMSE, the difference between the two weighting methods is minor. However, both forms of balancing weights outperform matching across all overlap scenarios by nearly a factor of 10. Balancing weights thus manage to reduce bias while also retaining a much larger effective sample size, which improves efficiency. Overall, we find that balancing weights are a clear improvement over matching. Balancing weights have lower bias than matching when overlap is poor and also are considerably more efficient.

\subsection{Simulation study 2}

In the second simulation, we focus on the performance of the proposed plug-in variance estimator, relative to using the standard weighted cluster-robust sandwich estimator, for the balancing weights solving Equation~\eqref{eq:balance_unit}. In this simulation, we fixed the overlap parameter at 10 and vary the number of clusters. We control the number of clusters by resampling clusters with replacement from the original data, and then generate outcomes and treatments following the DGP above. We used cluster sample sizes of 50, 100, 150, 200 and 250. For each scenario, we repeated the simulation 1,000 times, and we report the average standard error estimate and the length of the 95\% confidence interval.

Figure~\ref{fig:sim3} shows the results for the good overlap scenario. In the first panel, we compare the magnitude of the two variance estimates. As we expect, our proposed plug-in method produces smaller standard errors by removing variance due to the covariates. The difference between the two methods is largest when the number of clusters is small. These smaller standard errors also produce narrower confidence intervals. We also measured nominal coverage of the confidence intervals for both methods and found that both methods led to over-coverage of the true effect. This result is consistent with the general finding that the sandwich variance estimator for weighting methods is conservative. The pattern from the good overlap scenario is very similar for the other two overlap settings, so we report those results in the supplementary materials.

\begin{figure}
  \centering
    \includegraphics[scale=0.6]{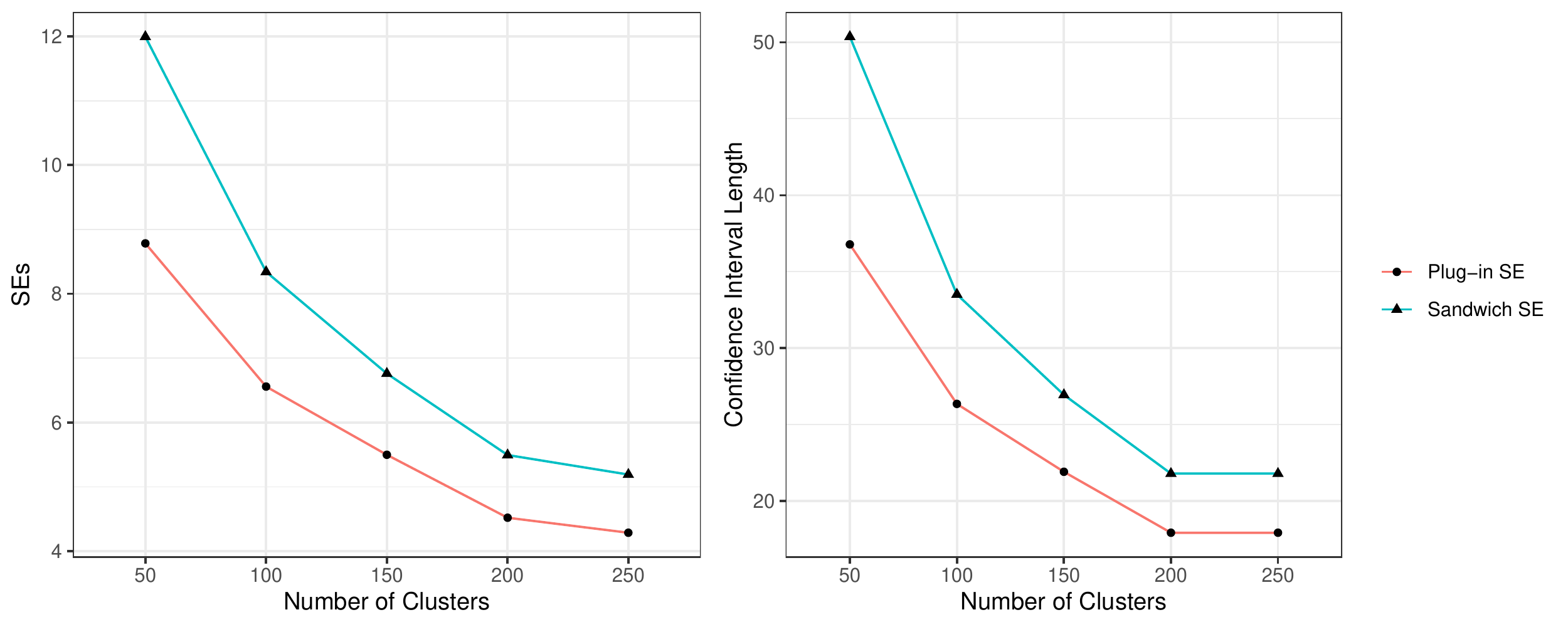}
    \caption{Comparative performance for two different variance estimators.}
  \label{fig:sim3}
\end{figure}

\section{Applications}
\label{sec:apps}

Next, we present results from two different empirical applications. The first, from education, compares the performance of Catholic and public schools. The overlap between Catholic and public schools is known to be poor \citep{keele2022overlap}, which allows us to study the performance of the subset weights in a context for which they were designed. The second application, from health services research, compares different residency programs for surgeons. Here, overlap is much better, but the sample sizes may prove computationally challenging for multilevel matching.

\subsection{Catholic Schools}
\label{sec:catholic}

Our analysis is a replication \citet{keele2022overlap}, which used multilevel matching and highlighted the limited amount of overlap between Catholic and public schools. Here, we compare results based on our proposed weighting methods to those based on multilevel matching. The data are a public release of the 1982 High School and Beyond survey and includes records for 7,185 high school students from 160 schools.  Of these schools, 70 are Catholic schools and are thus considered treated in this application, while the remainder are public high schools and thus serve as a reservoir of controls. The average number of students sampled per Catholic school is approximately 50 and ranges from 20 -- 67, while the average number of students sampled per public school is 41 and ranges from 14 -- 61. The data contain covariates on both students and schools. Student-level covariates are: an indicator for whether or not the student is female; an indicator for whether a student belongs to a particular racial/ethnic group; and a scale for socioeconomic status (SES). Three of the school-level measures are school-level averages of these student-level measures. Three additional school-level covariates are: total enrollment; the percentage of students on an academic track; and a measure of disciplinary climate. The disciplinary climate variable is a composite measure created from a factor score on measures of the number of attacks on teachers, fights, and other disciplinary incidents. This variable ranges from -1.7 to 2.7.

In our analysis, we computed COS balancing weights for the ATT and subset weights for the overlapping set. Using a random effects model, we estimated the two components of the hyperparameter: the estimated ICC is 0.036, and the estimated signal-to-noise ratio is 1.2. We also include two of the multilevel matches implemented in \citet{keele2022overlap}. The first of these matches retains all the Catholic schools, and the second trimmed 10 Catholic schools to improve the balance and increase overlap. Table~\ref{tab:bal} contains a comparison of how well each method balanced the baseline covariates as measured by the standardized difference: the difference in weighted/matched Catholic and public schools means divided by the pooled standard deviation before adjustment. The lack of overlap is apparent in the size of the standardized differences in the school-level covariates; several of the standardized differences are larger than 0.50, and three exceed 1. We observe that balancing weights outperform matching in terms of balance. While the standardized differences for the COS balancing weights are still fairly large for two covariates, they are less than half of those obtained via multilevel matching. Moreover, if we are willing to alter the estimand, the subset weights are able to nearly exactly balance the Catholic and public school distributions.

\begin{table}[htbp]
\centering
\begin{threeparttable}
    \caption{Balance Table Comparing Catholic and Public Schools on Baseline Covariate Distributions.}
    \label{tab:bal}
\begin{tabular}{lccccc}
  \toprule
 & Unweighted & Balancing & Subset & Matching & Matching -- \\
 &  & Weights & Weights &  & Trimmed  \\
  \midrule
  Student SES & 0.48 & 0.00 & 0.00 & 0.33 & 0.13 \\
  \% Students Minority & -0.14 & 0.26 & 0.00 & 0.24 & 0.22 \\
  \% Students Female & -0.00 & -0.10 & -0.00 & 0.04 & -0.09 \\
  Enrollment & -0.80 & -0.17 & -0.01 & -0.58 & -0.66 \\
  \% Students on Academic Track & 1.52 & 0.32 & 0.00 & 1.27 & 0.90 \\
  Disciplinary Climate Scale & -1.64 & -0.47 & -0.01 & -0.92 & -0.92 \\
  School SES Average & 1.17 & 0.15 & 0.00 & 0.79 & 0.31 \\
   \bottomrule
\end{tabular}
\begin{tablenotes}[para]
Note:  Cell entries are standardized differences, the difference in means divided by the pool standard
deviation.
\end{tablenotes}
\end{threeparttable}
\end{table}

To better understand the differences between the COS weights and the COS subset weights, we provide descriptive statistics on the largest weights. For the balancing weights, the vast majority of the weights are between 0 and 1, but there are 63 weights that are larger than 10, and the largest weight has a value of 92. For the COS subset weights the largest weight is 14. This should reduce the likelihood of unstable behavior in the treatment effect estimates due to large weights. Next, we measure the effective sample sizes to understand the loss of information due to weighting. In our data, before adjustment there are 5,273 students. The effective sample size for the balancing weights is 1,710, and 1,036 for the subset weights. For comparison, the effective sample size for the match with all treated schools is 570, and 392 for the match that trimmed treated schools.

Next, we review the point estimates for the Catholic school effect. Note that the outcome is a standardized test score, so estimates are measured in standard deviations.  First, the unadjusted effect is 0.43 with a 95\% CI of (0.31, 0.55). Next, the Catholic school effect estimated by balancing weights is -0.08. The 95\% confidence interval using the sandwich variance estimator is (-0.76, 0.59), and the confidence interval based on our proposed plug-in estimator is (-0.47, 0.31). The Catholic school estimate based on subset weights is 0.07. The 95\% confidence interval using the sandwich variance estimator is (-0.29, 0.44), and the confidence interval based on our proposed plug-in estimator is -0.08--0.24. Note that these estimates are not directly comparable, as they target different estimands, but we observe in both cases small point estimates with confidence intervals that include zero. Notably, the plug-in variance estimator produces shorter confidence intervals consistent with the simulation results. For comparison, we next report the estimates based on multilevel matching. The estimate based on multilevel matching is 0.26 (95\% CI: 0.12, 0.40), and the estimate based on matching with trimming is 0.13 (95\% CI: -0.06, 0.31). The estimates based on matching are closer to those based on weighting if we include additional bias reduction via outcome modeling. This suggests that weighting was much more successful than matching at removing bias without reference to outcomes.

Finally, while the subset balancing weights were able to balance the data better even with poor overlap, we had to change the estimand. To aid in the interpretation of this estimand, we plot the covariate means for Catholic and public schools before and after subset weighting in Figure~\ref{fig:estimand}. That is, we compare the raw Catholic school means to the weighted Catholic school means, and the same for public schools. We observe that Catholic and public school do not differ much in terms of gender and racial mix. While Catholic and public schools differ some in terms of SES, the key difference is in terms of disciplinary climate.  For this covariate, we observe the largest difference between the weighted and unweighted estimates for both Catholic and public schools. In the overlap population of schools, Catholic schools have a much stricter disciplinary climate and public schools have a more permissive disciplinary climate.

\begin{figure}
  \centering
    \includegraphics[scale=0.5]{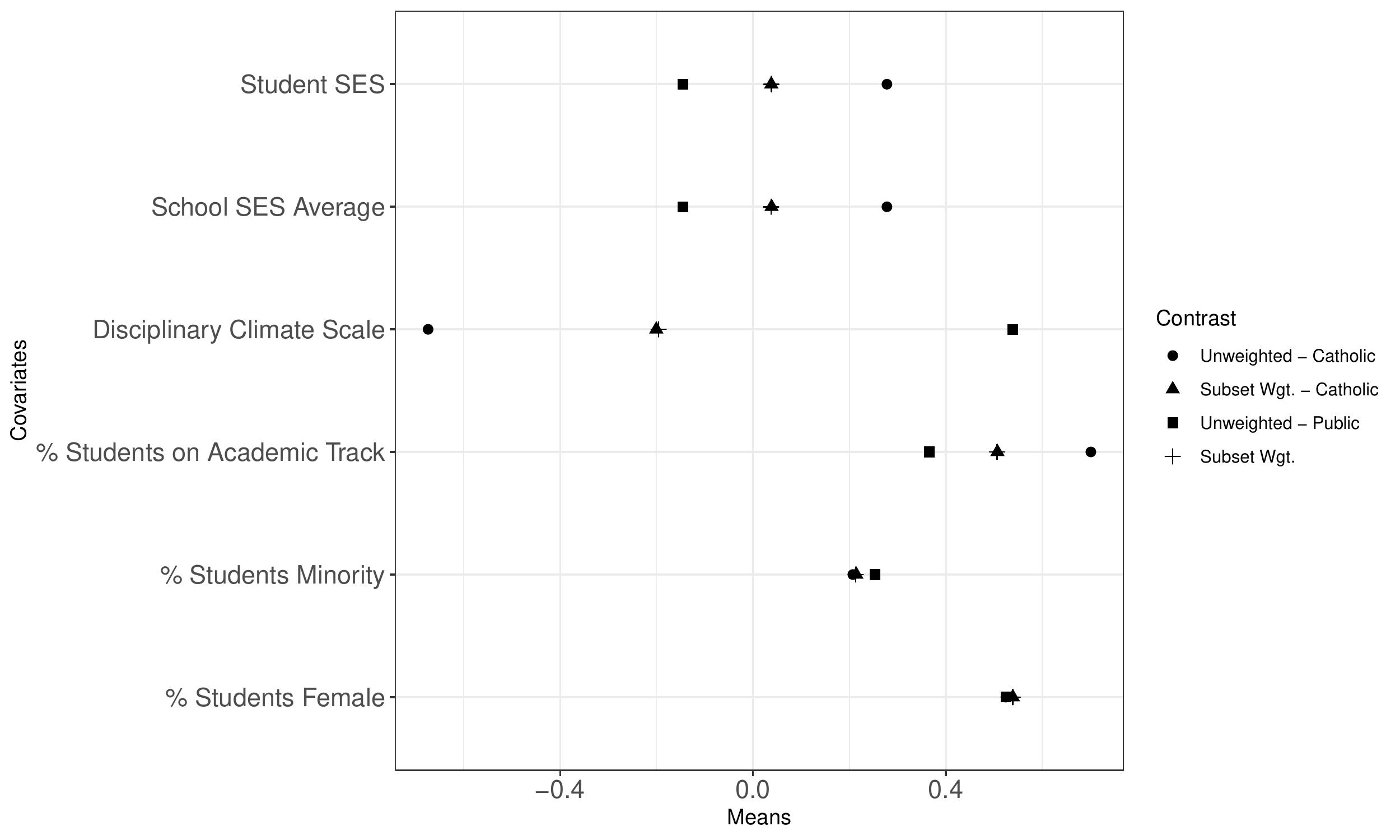}
    \caption{Covariate means for Catholic and Public schools before and after subset weighting.}
  \label{fig:estimand}
\end{figure}

\subsection{Surgical Training}

One important question in health services research is whether certain aspects of surgical training have an effect on patient outcomes \citep{asch2009evaluating,bansal2016using,zaheer2017comparing,sullivan2012effect}. \citet{sellers2018association} studied whether surgeons from university-based residency programs produce superior patient outcomes than surgeons trained in community-based residency programs.  The original study used data based on all-payer hospital discharge claims from New York, Florida and Pennsylvania from 2012--2013. In the data, surgeons were classified as having attended a university-based (UBR) or non-university based (NUBR) residency based on the residency program listed in the American Medical Association Masterfile.
The data contain covariates for surgeons, including age, sex, and years of training completion, and covariates for patients, such as sociodemographic and clinical characteristics including 31 comorbidities based on Elixhauser indices \citep{elixhauser1998comorbidity}. The primary outcome was postoperative complications. Complications were identified using ICD-9 diagnosis codes and collapsed into a binary variable indicating the development of 1 or more complications. They compared surgeon performance between UBR and NUBR surgeons for patients that underwent one of 44 common operations performed by general surgeons in an inpatient setting \citep{sellers2018association}.

In this application, there are 498 treated surgeons and 1,201 control surgeons. Overall, there are 86,305 patients operated on by UBR surgeons, and 193,307 patients operated on by NUBR surgeons. The number of patients treated by each surgeon varied from five to 1,074 over the two-year period. In the UBR application, standardized differences before weighting are relatively small--one indication that overlap is good for this data set. As such, we only target the ATT estimand and do not use the subset weights. We again set hyperparameter values based on estimates from a random effects model. In this data, the estimated ICC is 0.016, and the noise to signal ratio is 0.221. Figure~\ref{fig:ubr} contains a balance plot for the subset of covariates with the largest imbalances.
For each of these covariates, weighting improves any imbalance relative to the unadjusted data, giving close to exact balance.

In the full data, there are 279,611 patients. After weighting, the effective sample size is 262,672. Here, the loss of sample size is relatively small. We also attempted to implement a multilevel match  on a desktop computer with 32 GB of RAM. However, we received an error that \texttt{R} was unable to allocate enough memory to complete the match. For the balancing weights, we are able to estimate weights is less than a minute. As such, balancing weights have clear computational advantages for COS applications with larger sample sizes.
		
\begin{figure}
  \centering
    \includegraphics[scale=0.7]{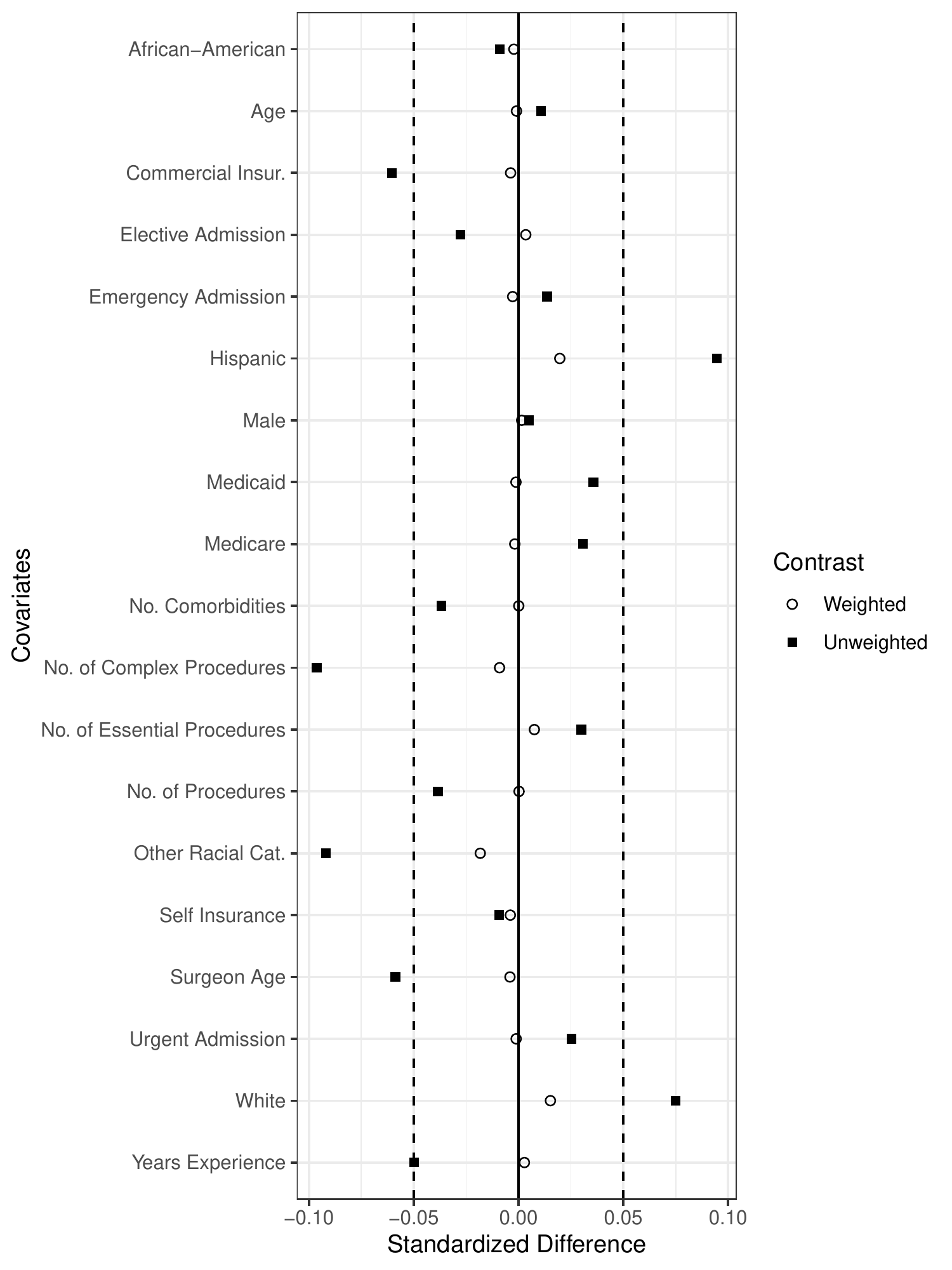}
    \caption{Balance Plot: UBR vs NUBR surgeons for the set of covariates with the largest baseline imbalances. }
  \label{fig:ubr}
\end{figure}

Next, we estimate treatment effects. First, we estimate the unadjusted treatment effect via a regression model with clustering at the surgeon level. We find that for UBR surgeons, the estimated percentage of cases with a complication is 1.35 percentage points lower, and the 95\% confidence interval does not contain zero (95\% CI: -2.14, -0.55). Once we account for confounding via weighting, the estimate treatment effect is -0.85 percentage points. We estimated 95\% confidence intervals using both the cluster-robust sandwich variance estimator (95\% CI: -1.62, -0.09) and our proposed plug-in variance estimator (95\% CI: -1.53, -0.18). Because the balance is excellent, further bias reduction via an outcome model does not meaningfully change the estimate. In sum, we find that UBR surgeons do indeed appear to cause fewer complications, even when accounting for observed confounding. However, the magnitude of the treatment effect is smaller once we control for observed covariates.

\section{Conclusion}
\label{sec:dis}

We introduced an approximate balancing weight estimator for designs where treatments are administered to entire clusters such as school or hospitals. To do so, we considered two potential estimands --- one adjusting for both unit-level and cluster-level covariates and another adjusting for cluster-level covariates alone --- that have different identification assumptions associated with them.
For both of these estimands, we find weights to minimize an upper bound on the mean square error of the resulting weighting estimator. When adjusting for cluster-level covariates alone, the weights are constant within clusters, while the weights vary across units within clusters when including unit-level covariates. This affects the overall variance, as units' outcomes can be correlated within clusters. We showed that when it is sufficient to adjust only for cluster-level covariates in order to estimate the ATT, there can be efficiency gains to including unit-level covariates as well, depending on the predictive strength of those covariates and the level of correlation between units' outcomes within clusters. We also considered two adaptations to deal with cases where it is impossible to find weights that achieve good covariate balance: (i) bias-correction via an outcome model and (ii) changing the estimand and finding an overlapping weighted subset of the data. We then showed how to construct confidence intervals that are asymptotically valid under certain conditions. In a simulation study, we demonstrated that our proposed weighting estimator outperformed multilevel matching. This was especially true in terms of using more of the sample size which resulted in higher efficiency and lower RMSE. In two empirical applications, we found balancing weights also had several practical advantages over multilevel matching.

There are several avenues for future work. First, while we propose a heuristic for choosing the hyperparameters from the observed data, an important question is how to choose these hyperparameters in a rigorous, data-driven way that keeps weights independent of outcomes. Second, we can consider resampling approaches to uncertainty quantification such as the block weighted bootstrap proposed by \citet{Cui2022_match}. Finally, many COS designs have a longitudinal structure, where data is available at multiple granularities over time. For example, many policy changes happen at the state-level, but county and municipality-level data series exist for the outcome of interest. Exploring these cases and extending our analysis to such settings will be important areas for future research.

\clearpage
\renewcommand{\refname}{Bibliography \& References Cited}
\bibliography{cluster}

\clearpage

\noindent	

{\bfseries\large
Supplement to ``Approximate Balancing Weights for Clustered Observational Study Designs'}

\section{Proofs and derivations}
\label{sec:proofs}

\begin{assumption}
  \label{a:technical}
  For some $2 \leq r < \infty$,
  \begin{enumerate}[label ={(\roman*)}]
    \item \label{a:cluster_size_1} $\frac{1}{n_0}\left(\sum_{A_\ell = 0}n_\ell^r\right)^{\frac{2}{r}} \leq C < \infty$
    \item \label{a:cluster_size_2}$\max_{\ell} \frac{n_\ell^2}{n_0} \to 0$
    \item \label{a:regularity} $\lim_{M \to \infty} \sup_i \E\left[|\varepsilon_1|^r \bbone\{|\varepsilon_i| > M\}\right] = 0$
    \item \label{a:positive_var} There exists some $\lambda > 0$ such that $V^\text{unit}_n \geq \lambda$
    \item As $n \to \infty$, $\frac{n_1}{n} \to p$ for some constant $p$
  \end{enumerate}
\end{assumption}

\begin{proof}[Proof of Theorem \ref{thm:asymp_normal}]
  First, we decompose the estimation error scaled by the inverse standard deviation as
  \begin{align*}
    \frac{1}{\sqrt{V^\text{unit}_n}} \left(\hat{\mu}(\hat{\gamma}) - \tilde{\mu}_{0wx}\right) & = \underbrace{\frac{1}{\sqrt{V^\text{unit}_n}} \left(\frac{1}{n_1}\sum_{A_\ell = 0} \sum_{J_i = \ell}\gamma_i m_{wx}(0, W_\ell, X_i) - \frac{1}{n_1} \sum_{A_\ell = 1} \sum_{J_i = \ell} m_{wx}(0, W_\ell, X_i)\right)}_{o_p(1)}\\
    & + \frac{1}{\sqrt{V^\text{unit}_n}} \frac{1}{n_1}\sum_{A_\ell = 0}\sum_{J_i = \ell} \hat{\gamma}_i \varepsilon_i.
  \end{align*}
  Because $\imbalance_{\mathcal{M}_{wx}}(\hat{\gamma}) = o_p\left(1/\sqrt{V^\text{unit}_n}\right)$, the first term is $o_p(1)$, so we focus on the second term.

  Note that because each weight is bounded, $L \leq \hat\gamma_i \leq U$, the product of the weights and the residuals also satisfy Assumption~\ref{a:technical}\ref{a:regularity}, i.e. 
  \[
    \lim_{M \to \infty} \sup_i \E\left[|\hat{\gamma}_i|^r|\varepsilon_1|^r \bbone\{|\varepsilon_i| > M\}\right] \leq \lim_{M \to \infty} \sup_i \E\left[\max\{|L|, U\}|\varepsilon_1|^r \bbone\left\{|\varepsilon_i| > \frac{M}{\max{|L|, U}}\right\}\right] = 0.
  \]
  Now define $\Sigma_n = \frac{n_1^2}{n_0} V_n^\text{unit}$. By, \citet{Hansen2019_asymptotic} Theorem 2,
  \[
    \frac{1}{\sqrt{V^\text{unit}_n}} \frac{1}{n_1}\sum_{A_\ell = 0}\sum_{J_i = \ell} \hat{\gamma}_i \varepsilon_i = \frac{\sqrt{n_0}}{\sqrt{\Sigma_n}} \frac{1}{n_0}\sum_{A_\ell = 0}\sum_{J_i = \ell} \hat{\gamma}_i \varepsilon_i \Rightarrow N(0,1).
  \]
  Putting these two terms together and applying Slutsky's theorem, we have that
  \[
    \frac{1}{\sqrt{V^\text{unit}_n}} \left(\hat{\mu}(\hat{\gamma}) - \tilde{\mu}_{0wx}\right) \Rightarrow N(0,1).
  \]

  We now turn to the second claim that $\frac{\hat{V}^\text{unit}} {V^\text{unit}_n} \to 1$ in probability, or equivalently, that 
  $\frac{\hat{V}_n^\text{unit} - V^\text{unit}_n} {V^\text{unit}_n} = o_p(1)$. Define the variance estimate with the true residuals as
  \[
    \tilde{V}^\text{unit}_n \equiv  \frac{1}{n_1^2}\sum_{A_\ell = 0}\left[ \sum_{J_i = \ell} \gamma_i^2 \varepsilon_i^2 + \sum_{J_i = \ell}\sum_{J_k = \ell, k \neq i} \gamma_i\gamma_j\varepsilon_i\varepsilon_k\right].
  \]
  Adding and subtracting $\tilde{V}^\text{unit}_n$, we get that 
  \[
    \frac{\hat{V}_n^\text{unit} - V^\text{unit}_n} {V^\text{unit}_n}  = \frac{\hat{V}^\text{unit} - \tilde{V}^\text{unit}_n} {V^\text{unit}_n} + \frac{\tilde{V}^\text{unit}_n - V^\text{unit}_n} {V^\text{unit}_n}
  \]
  Theorem 3 in \citet{Hansen2019_asymptotic} implies that  $\frac{\tilde{V}^\text{unit}_n - V^\text{unit}_n} {V^\text{unit}_n} - o_p(1)$. Since $V_n^\text{unit} \geq \lambda > 0$ by Assumption~\ref{a:technical}\ref{a:positive_var}, showing that $\hat{V}^\text{unit}_n - V^\text{unit}_n = o_p(1)$ will complete the claim. We can write this difference as
  \begin{align}
    &\hat{V}^\text{unit}_n - V^\text{unit}_n \nonumber\\
    = &\frac{1}{n_1^2}\sum_{A_\ell = 0} \sum_{i \mid J_i = \ell}\sum_{k \mid J_k = \ell} \hat{\gamma}_i\hat{\gamma}_k (m(0, W_\ell, X_i) - \hat{m}(0, W_\ell, X_i)) (m(0, W_\ell, X_k) - \hat{m}(0, W_\ell, X_k))
    \label{eq:part1}\\
    + & \frac{1}{n_1^2}\sum_{A_\ell = 0} \sum_{i \mid J_i = \ell}\sum_{k \mid J_k = \ell} \hat{\gamma}_i\hat{\gamma}_k \varepsilon_i (m(0, W_\ell, X_k) - \hat{m}(0, W_\ell, X_k))
    \label{eq:part2}.
  \end{align}
  By H\"{o}lder's inequality, \eqref{eq:part1} is bounded by
  \[
    \eqref{eq:part1} \leq  \left(\max_{\ell, i} |m(0, W_\ell, X_i) - \hat{m}(0, W_\ell, X_i)| \right)^2\frac{1}{n_1^2}\sum_{A_\ell = 0} \sum_{i \mid J_i = \ell}\sum_{k \mid J_k = \ell} |\hat{\gamma}_i||\hat{\gamma}_k|.
  \]
  Since $\sup_{w,x} |\hat{m}(0, w, x) - m(0, w, x)| = o_p(1)$, this bound implies that \eqref{eq:part1} is also $o_p(1)$.

  To control \eqref{eq:part2}, define
  \[
    \mu_\ell \equiv \sum_{J_i = \ell}\sum_{J_k = \ell}\hat{\gamma}_i \hat{\gamma}_k \E\left[\varepsilon_i\hat{m}(0, W_\ell, X_i) \mid  \bm{W}, \bm{A}, \bm{J}, \bm{X} \right],
  \]
  so that the overall expectation of \eqref{eq:part2} is
  \[
    \E[\eqref{eq:part2} \mid \bm{W}, \bm{A}, \bm{J}, \bm{X} ] = \frac{1}{n_1^2}\sum_{A_\ell = 0} \mu_\ell,
  \]
  since $\E\left[\varepsilon_i m(0, W_\ell, X_i) \mid  \bm{W}, \bm{A}, \bm{J}, \bm{X} \right] = 0$.
  Now we bound the variance as
  \begin{align*}
    \Var\left(\eqref{eq:part2} \mid \bm{W}, \bm{A}, \bm{J}, \bm{X} \right) & = \frac{1}{n_1^4}\sum_{A_\ell = 0} \Var\left( \sum_{J_i = \ell}\sum_{J_k = \ell} \hat{\gamma}_i \hat{\gamma}_k \varepsilon_i (m(0, W_\ell, X_k) - \hat{m}(0, W_\ell, X_k)) \mid \bm{W}, \bm{A}, \bm{J}, \bm{X} \right)\\
    & \leq \left(\sup_{w,x} |\hat{m}(0, w, x) - m(0, w, x)|\right)^2\frac{1}{n_1^4}\sum_{A_\ell = 0} \Var\left( \sum_{J_i = \ell}\sum_{J_k = \ell} \hat{\gamma}_i \hat{\gamma}_k \varepsilon_i \mid \bm{W}, \bm{A}, \bm{J}, \bm{X} \right).
  \end{align*}

  So by Chebyshev's inequality, $\eqref{eq:part2} - \frac{1}{n_1^2}\sum_{A_\ell = 0} \mu_\ell = o_p(1)$. Finally, by assumption, $\lim_{n\to\infty} \frac{1}{n_1^2}\sum_{A_\ell = 0} \mu_\ell  = 0$, so \eqref{eq:part2} is $o_p(1)$.
  Putting together the pieces, we have that $\frac{\hat{V}_n^\text{unit} - V^\text{unit}_n} {V^\text{unit}_n} = o_p(1)$.

  Finally, again by Slutsky's theorem, $\frac{1}{\sqrt{\hat{V}^\text{unit}_n}} \left(\hat{\mu}(\hat{\gamma}) - \tilde{\mu}_{0wx}\right) \Rightarrow N(0,1)$.

\end{proof}

\section{Simulation Study: Additional Results}

Next, we report the full set of results from the second simulation study. Specifically, we report the results from the medium and poor overlap scenarios for the simulation that compared variance estimators. Figure~\ref{fig:se-med-overlap} contains the simulation results in the good overlap scenario. Figure~\ref{fig:se-poor-overlap} contains the simulation results in the poor overlap scenario.

\begin{figure}[htbp]
  \centering
   \includegraphics[scale=0.7]{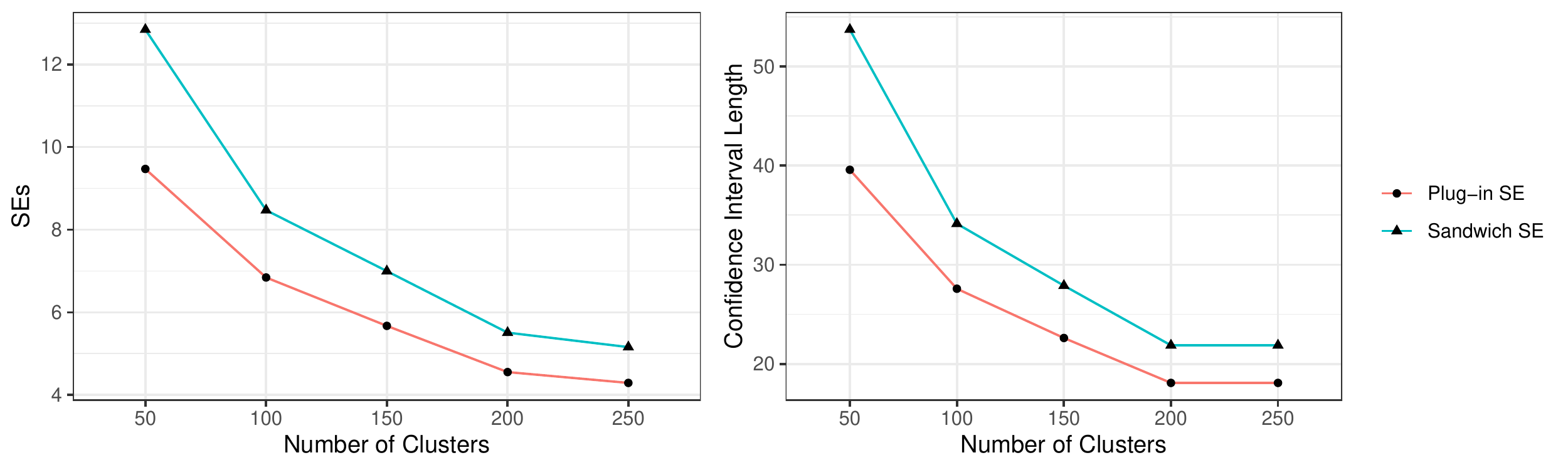}
    \caption{Comparative performance for two different variance estimators. -- medium overlap scenario.}
  \label{fig:se-med-overlap}
\end{figure}

\begin{figure}[htbp]
  \centering
    \includegraphics[scale=0.7]{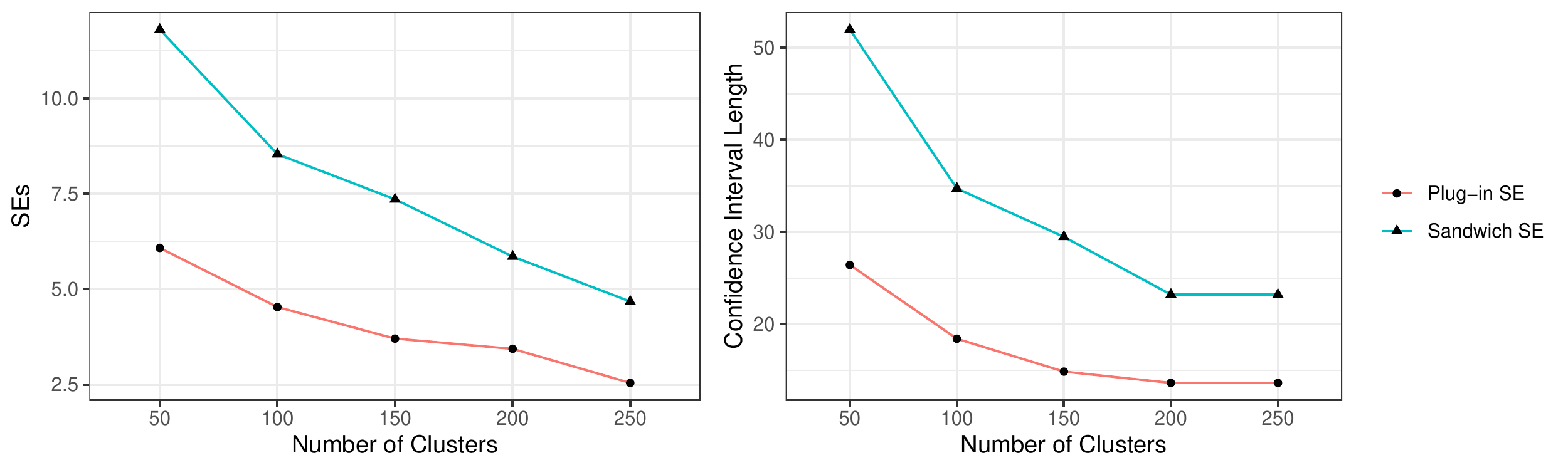}
    \caption{Comparative performance for two different variance estimators. -- poor overlap scenario.}
  \label{fig:se-poor-overlap}
\end{figure}

\end{document}